\newcommand\etal{~et~al.\xspace}
\newcommand\opcit{~op.~cit.\xspace}
\newcommand\theaplayer{Agnetha\xspace}
\newcommand\theeplayer{Elvis\xspace}
\newcommand\setcomp[2]{\left\{{#1}\ \left|\ {#2}\right.\right\}}
\newcommand\set[1]{\left\{{#1}\right\}}
\newcommand\powerset[1]{\mathbb{P}(#1)}
\newcommand\tup[1]{\brac{#1}}
\newcommand\brac[1]{({#1})}
\newcommand\ap[2]{{#1}\mathord{\brac{#2}}}
\newcommand\idxi{i}
\newcommand\idxj{j}
\newcommand\numof{\ell}
\newcommand\naturals{\mathbb{N}}
\newcommand\sizeof[1]{\left|{#1}\right|}
\DeclarePairedDelimiter{\ceil}{\lceil}{\rceil}
\newcommand\nexp[1]{\text{Exp}_{#1}}
\newcommand\cpds{\mathcal{C}}
\newcommand\orcpds{\orof{\cpds}}
\newcommand\crcpds{\crof{\cpds}}
\newcommand\prcpds{\prof{\cpds}}
\newcommand\controls{\mathcal{P}}
\newcommand\orcontrols{\mathcal{Q}}
\newcommand\crcontrols{\mathcal{Q}}
\newcommand\prcontrols{\mathcal{Q}}
\newcommand\controlset{P}
\newcommand\cpdsrules{\mathcal{R}}
\newcommand\orcpdsrules{\orof{\mathcal{R}}}
\newcommand\crcpdsrules{\crof{\mathcal{R}}}
\newcommand\prcpdsrules{\prof{\mathcal{R}}}
\newcommand\oralphabet{\orof{\alphabet}}
\newcommand\cralphabet{\crof{\alphabet}}
\newcommand\pralphabet{\prof{\alphabet}}
\newcommand\cpdsord{n}
\newcommand\opord{k}
\newcommand\cops[1]{\mathcal{O}_{#1}}
\newcommand\cha{a}
\newcommand\chb{b}
\newcommand\rew[1]{\mathit{rew}_{#1}}
\newcommand\cpush[1]{\mathit{push}_{#1}}
\newcommand\push[1]{\mathit{push}_{#1}}
\newcommand\pop[1]{\mathit{pop}_{#1}}
\newcommand\collapse{\mathit{collapse}}
\newcommand\noop{\mathit{nop}}
\newcommand\dontcare{*}
\newcommand\cpdsrule[4]{\tup{{#1},{#2},{#3},{#4}}}
\newcommand\control{p}
\newcommand\orcontrol{q}
\newcommand\genop{o}
\newcommand\configc{c}
\newcommand\config[2]{\left\langle {#1}, {#2} \right\rangle}
\newcommand\stackw{w}
\newcommand\stacku{u}
\newcommand\stackv{v}
\newcommand\ctop[1]{top_{#1}}
\newcommand\cbottom[2]{bot^{#2}_{#1}}
\newcommand\cpdstran{\longrightarrow}
\newcommand\alphabet\Sigma
\newcommand\chorder{\lambda}
\newcommand\orchorder{\lambda'}
\newcommand\ccompose[3]{{#1} :_{#2} {#3}}
\newcommand\annot[2]{#1^{#2}}
\newcommand\sbrac[2]{[{#1}]_{#2}}
\newcommand\stacks[1]{{Stacks_{#1}}}
\newcommand\game{\mathcal{G}}
\newcommand\wincond{\mathcal{W}}
\newcommand\rankf{\rho}
\newcommand\orrankf{\orof{\rankf}}
\newcommand\crrankf{\crof{\rankf}}
\newcommand\prrankf{\prof{\rankf}}
\newcommand\rank{r}
\newcommand\owner{\mathbb{O}}
\newcommand\orowner{\orof{\owner}}
\newcommand\prowner{\prof{\owner}}
\newcommand\aplayer{\mathrm{A}}
\newcommand\eplayer{\mathrm{E}}
\newcommand\genplayer{\Upsilon}
\newcommand\strat{\sigma}
\newcommand\maxrank{m}
\newcommand\link{\ell}
\newcommand\colrankfun{\mathrm{Rk}}
\newcommand\higherranks{\mathrm{HRanks}}
\newcommand\allranks{\mathrm{Ranks}}
\newcommand\extrkf[2]{{#1}[#2]}
\newcommand\controlinit{\control_I}
\newcommand\stackinit{\stackw_I}
\newcommand\orcontrolinit{\orcontrol_I}
\newcommand\prcontrolinit{\prof{\control_I}}
\newcommand\chainit{\cha_I}
\newcommand\crchainit{\crof{\cha_I}}
\newcommand\crchorder{\crof{\chorder}}
\newcommand\prchorder{\prof{\chorder}}
\newcommand\prchainit{\prof{\cha_I}}
\newcommand\orof[1]{#1'}
\newcommand\orderred{\mathrm{Order}}
\newcommand\popguess{\vec{\controlset}}
\newcommand\undefpopguess{\circleddash}
\newcommand\ewin{\mathtt{tt}}
\newcommand\elose{\mathtt{ff}}
\newcommand\controlsink{\control_s}
\newcommand\numranks{M}
\newcommand\rankawareconst{N}
\newcommand\rankawarepower{D}
\newcommand\crof[1]{#1'}
\newcommand\counterred[1]{\mathrm{Counter}_{#1}}
\newcommand\bound{B}
\newcommand\counter{\alpha}
\newcommand\countervec{\vec{\counter}}
\newcommand\counterinc[1]{\oplus_{#1}}
\newcommand\overflow{\mathtt{NaN}}
\newcommand\eover{\#}
\newcommand\aover{\$}
\newcommand\prof[1]{#1'}
\newcommand\gameoc{\game_{O,C_\bound}}
\newcommand\gameco{\game_{C_\bound,O}}
\newcommand\gameo{\game_O}
\newcommand\gamec{\game_{C_\bound}}
\newcommand\linkalphabet{\Delta}
\newcommand\indexalphabet{\Gamma}
\newcommand\counteralphabet[1]{\indexalphabet_{#1}}
\newcommand\encodingalphabet[1]{\hat{\indexalphabet}_{#1}}
\newcommand\bzero[1]{0_{#1}}
\newcommand\bone[1]{1_{#1}}
\newcommand\countercheck[1]{\mathit{encoding}_{#1}}
\newcommand\equalscheck[1]{\mathit{equal}_{#1}}
\newcommand\polyred[1]{\mathrm{Poly}_{#1}}
\newcommand\linkchar[1]{\link_{#1}}
\newcommand\controlincnoa[2]{\tup{\mathrm{inc}, #1, #2}}
\newcommand\controlinc[3]{\tup{\mathrm{inc}, #1, #2, #3}}
\newcommand\controlzero[3]{\tup{\mathrm{zero}, #1, #2, #3}}
\newcommand\controlzerotest[3]{\tup{\mathrm{zchk}, #1, #2, #3}}
\newcommand\controlcopy[3]{\tup{\mathrm{copy}, #1, #2, #3}}
\newcommand\controlcopytest[3]{\tup{\mathrm{cchk}, #1, #2, #3}}
\newcommand\controlcollapse[3]{\tup{\mathrm{collapse}, #1, #2,  #3}}
\newcommand\controlpop[3]{\tup{\mathrm{pop}_1, #1, #2, #3}}
\newcommand\controlscw{\controls_{\mathit{CW}}}
\newcommand\controlsop{\controls_{\mathit{OP}}}
\newcommand\rulescw{\cpdsrules_{\mathit{CW}}}
\newcommand\gamebound{G_B}
\newcommand\gamepoly{G_P}
\providecommand*{\dotcup}{%
    \mathbin{%
        \mathpalette\@dotcup{}%
    }%
}
\newcommand*{\@dotcup}[2]{%
    \ooalign{%
        $\m@th#1\cup$\cr
        \hidewidth$\m@th#1\cdot$\hidewidth
    }%
}
\newcommand\reftheorem[1]{\expandafter\csname reftheorem#1\endcsname}
\newenvironment{namedlemma}[2]{%
    \expandafter\gdef\csname reflemma#1\endcsname{%
        Lemma~\ref{#1} (#2)%
    }%
    \begin{lemma}[{#2}] \label{#1}%
}{%
    \end{lemma}%
}
\newcommand\reflemma[1]{\expandafter\csname reflemma#1\endcsname}
\newcommand\refproperty[1]{\expandafter\csname refproperty#1\endcsname}
\newenvironment{nameddefinition}[2]{%
    \expandafter\gdef\csname refdefinition#1\endcsname{%
        Definition~\ref{#1} (#2)%
    }%
    \begin{definition}[{#2}] \label{#1}%
}{%
    \end{definition}%
}
\newcommand\refdefinition[1]{\expandafter\csname refdefinition#1\endcsname}
\newtheorem{proposition}[theorem]{Proposition}
\title{Parity to Safety in Polynomial Time for Pushdown and Collapsible Pushdown Systems}
\titlerunning{Parity to Safety in Polynomial Time}
\author{Matthew Hague}
       {Royal Holloway, University of London}
       {matthew.hague@rhul.ac.uk}
       {https://orcid.org/0000-0003-4913-3800}
       {Supported by the EPSRC under grant [EP/K009907/1].}
\author{Roland Meyer}
       {TU Braunschweig}
       {roland.meyer@tu-braunschweig.de}
        {https://orcid.org/0000-0001-8495-671X}
       {}
\author{Sebastian Muskalla}
       {TU Braunschweig}
       {s.muskalla@tu-braunschweig.de}
       {https://orcid.org/0000-0001-9195-7323}
       {}
\author{Martin Zimmermann}
       {Universit\"at des Saarlandes}
       {zimmermann@react.uni-saarland.de}
       {https://orcid.org/0000-0002-8038-2453}
       {Supported by the DFG under grant [ZI 1516/1-1].}
\authorrunning{M. Hague, R. Meyer, S. Muskalla, M. Zimmermann}
\subjclass{\\
    \ccsdesc[500]{Theory of computation~Logic and verification},\\
    \ccsdesc[500]{Theory of computation~Modal and temporal logics},\\
    \ccsdesc[500]{Theory of computation~Verification by model checking},\\
    \ccsdesc[300]{Theory of computation~Grammars and context-free languages}.\\
}
\keywords{Parity Games, Safety Games, Pushdown Systems, Collapsible Pushdown Systems, Higher-Order Recursion Schemes, Model Checking}
\begin{document}

\maketitle

\begin{abstract}
    We give a direct polynomial-time reduction from parity games played over the configuration graphs of collapsible pushdown systems to safety games played over the same class of graphs. That a polynomial-time reduction would exist was known since both problems are complete for the same complexity class. Coming up with a direct reduction, however, has been an open problem. Our solution to the puzzle brings together a number of techniques for pushdown games and adds three new ones. This work contributes to a recent trend of liveness to safety reductions which allow the advanced state-of-the-art in safety checking to be used for more expressive specifications.

\end{abstract}

\newpage

\section{Introduction}

Model-checking games ask whether there is a strategy (or implementation) of a system that can satisfy required properties against an adversary (or environment).
They give a natural method for reasoning about systems wrt. popular specification logics such as LTL, CTL, and the $\mu$-calculus.
The simplest specifications are reachability or safety properties, where the system either needs to reach a given good state or avoid a bad state (such as a null-pointer dereference).
The most expressive logic typically studied is the $\mu$-calculus, which subsumes LTL, CTL, and CTL$^\ast$~\cite{L05}.
One can reduce $\mu$-calculus model checking in polynomial time to the analysis of parity games (op cit.) via a quite natural product of system and formula.

In the finite-state setting, while reachability and safety games can be solved in linear time and space, the best known algorithms for parity games are quasi-polynomial time~\cite{CJKLS17} or quasi-linear space~\cite{JL17,FJSSW17}.
For infinite-state games described by pushdown systems, or more generally, collapsible pushdown systems, the complexities match: EXPTIME-complete for solving reachability, safety~\cite{BEM97,W96}, and parity games~\cite{W96} over pushdown systems, and $\cpdsord$-EXPTIME-complete for order-$\cpdsord$ collapsible pushdown systems~\cite{E91,CW07,O06,HMOS08}.

Pushdown systems are an operational model for programs with (recursive) function calls.
In such systems, a configuration has a control state from a finite set and a stack of characters from a finite alphabet (modeling the call stack).
Collapsible pushdown systems~\cite{HMOS08} are an operational model for higher-order recursion as found in most languages (incl. Haskell, JavaScript, Python, C++, Java, \ldots).
They have a nested stack-of-stacks (e.g.\ an order-$2$ stack is a stack of stacks) and \emph{collapse links} which provide access to calling contexts.

Given that safety and parity games over collapsible pushdown systems are complete for the same complexity classes, the problems must be inter-reducible in polynomial-time.
However, a direct (without a detour via Turing machines) polynomial-time reduction from parity to safety has been an open problem~\cite{FZ12}.
To see why the reduction is difficult to find, note that a safety game is lost based on a finite prefix of a play while determining the winner of a parity game requires access to the infinitely many elements of a play.
Complexity theory tells us that this gap can be bridged by access to the stack, with only polynomial overhead.

Our contribution is such a polynomial-time reduction from parity to safety.
From a theoretical standpoint, it explains the matching complexities despite the difference in expressible properties.
From a practical standpoint, it may help building model-checking tools for $\mu$-calculus specifications.
Indeed, competitive and highly optimized tools exist for analysing reachability and safety properties of higher-order recursion schemes (HorSat~\cite{BK13,TK14,horsat2} and Preface~\cite{RNO14} being the current state-of-the-art), but implementing efficient tools for parity games remains a problem~\cite{FIK13,NO14}.
Having the reduction at hand can allow the use of safety tools for checking parity conditions, suggest the transfer of techniques and optimizations from safety to parity, and inspire new algorithms for parity games.
Still, a complexity-theoretic result should only be considered a first step towards practical developments.

Reductions from parity to safety have been explored for the finite-state case by Bernet\etal~\cite{BJW02}, and for pushdown systems by Fridman and Zimmermann~\cite{FZ12}.
We will refer to them as counter reductions, as they use counters to track the occurrences of odd ranks.
These existing reductions are not polynomial.
Berwanger and Doyen~\cite{BD08} showed that counter reductions can be made polynomial in the case of finite-state imperfect-information games.

Our solution to the puzzle brings together a number of techniques for pushdown games and contributes three new ones.
We first show how to lift the existing counter reductions~\cite{BJW02,FZ12} from first order to higher orders.
For this we exploit a rank-awareness property of collapsible pushdown systems~\cite{HMOS08}.
Secondly, we prove the correctness of this lifting by showing that it commutes with a reduction from order-$\cpdsord$ to order-$(\cpdsord-1)$ games~\cite{W96,HMOS08}.
The polynomial-time reduction is then a compact encoding of the lifted counter reduction.
It uses the ability of higher-order stacks to encode large numbers~\cite{CW07} and the insight that rank counters have a stack-like behavior, even in their compact encoding.

Much recent work verifies liveness properties via reductions to safety ~\cite{PHLPSS18,DCGTM16,PR04,PR11,FKP16} or reachability~\cite{KLVW17,BJKKRVW16} with promising results.
For finite-state generalized parity games, Sohail and Somenzi show that pre-processing via a safety property can reduce the state space that a full parity algorithm needs to explore, giving competitive synthesis results for LTL~\cite{SS09}.
In the case of infinite-state systems (including pushdowns), reductions from liveness (but not parity games) have been explored by Biere\etal~\cite{BAS02} and Schuppan and Biere~\cite{SB05}.

\section{Preliminaries}%
\label{sec:preliminaries}
We define games over collapsible pushdown systems (CPDS).
For a full introduction see~\cite{HMOS08}.
CPDS are an operational model of functional programs that is equivalent to higher-order recursion schemes (HORS)~\cite{HMOS08}.
Without collapse, they correspond to \emph{safe} HORS~\cite{KNU02}.

In the following,
let $\naturals$ be the set of natural numbers (including $0$)
and
$[\idxi,\idxj]$
denote the set
$\set{\idxi, \idxi + 1, \ldots, \idxj}$.

\subsection{Higher-Order Collapsible Stacks}
Higher-order stacks are a nested stack-of-stacks structure whose stack characters are annotated by collapse links that point to a position in the stack.
Intuitively, this position is the context in which the character was created.
We describe the purpose of collapse links after some basic definitions.
\begin{definition}[Order-$\cpdsord$ Collapsible Stacks]
    For $\cpdsord \geq 1$, let $\alphabet$ be a finite set of stack characters $\alphabet$ together with a partition function\footnote{%
        Readers familiar with CPDS may expect links to be pairs
        $\tup{\opord, \idxi}$
        and the alphabet $\alphabet$ not to be partitioned by link order.
        The partition assumption is oft-used.
        It is always possible to tag each character with its link order using
        $\alphabet \times [1,\cpdsord]$.
        Such a partition becomes crucial in Section~\ref{sec:orderreduction}.
    }
    $\chorder : \alphabet \rightarrow [1,\cpdsord]$.
    An \emph{order-$0$ stack} with up-to order-$\cpdsord$ collapse links is an annotated character
    $\cha^\idxi \in \alphabet \times \naturals$.
    An \emph{order-$\opord$ stack} with up-to order-$\cpdsord$ collapse links is a non-empty sequence
    $\stackw = \sbrac{\stackw_1 \ldots \stackw_\numof}{\opord}$
    (with $\numof > 0$)
    such that each $\stackw_\idxi$ is an order-$(\opord-1)$ stack with up-to order-$\cpdsord$ collapse links.
    By $\stacks{\cpdsord}$ we denote the set of order-$\cpdsord$ stacks with up-to order-$\cpdsord$ links.
\end{definition}
In the sequel, we will refer to stacks in $\stacks{\cpdsord}$ as order-$\cpdsord$ stacks.
By \emph{order-$\opord$} stack we will mean an order-$\opord$ stack with up-to order-$\cpdsord$ links, where $\cpdsord$ will be clear from the context.

Given an order-$\opord$ stack with up-to order-$\cpdsord$ links
$\stackw=\sbrac{\stackw_1\ldots\stackw_\numof}{\opord}$, we define below the operation $\ctop{\opord'}$ to return the topmost element of the topmost order-$\opord'$ stack.
Note that this element is of order-($\opord'-1$).
The top of a stack appears leftmost.
The operation $\cbottom{\opord}{\idxi}$ removes all but the last $\idxi$ elements from the topmost order-$\opord$ stack.
It does not change the order of the stack and requires $\idxi \in [1, \ell]$.
\[
    \begin{array}{rclcrcl}
        \ap{\ctop{\opord}}{\stackw}
        &=&
        \stackw_1

        &\quad&

        \ap{\cbottom{\opord}{\idxi}}{\stackw}
        &=&
        \sbrac{
            \stackw_{\numof - \idxi + 1} %
            \ldots
            \stackw_\numof
        }{\opord} \\

        \ap{\ctop{\opord'}}{\stackw}
        &=&
        \ap{\ctop{\opord'}}{\stackw_1} \text{ ($\opord' < \opord$)}

        &\quad&

        \ap{\cbottom{\opord'}{\idxi}}{\stackw}
        &=&
        \sbrac{
            \ap{\cbottom{\opord'}{\idxi}}{\stackw_1}
            \stackw_2 \ldots \stackw_\numof
        }{\opord}
        \text{ ($\opord' < \opord$).}
    \end{array}
\]
For technical convenience, we will also define
\[
    \ap{\ctop{\cpdsord+1}}{\stackw} = \stackw
\]
which, we note, does not extend to $\ctop{\cpdsord+2}$ or beyond.

The destination of a collapse link $\idxi$ on $\cha$ with $\ap{\chorder}{\cha} = \opord$ in a stack $\stackw$
is
$\ap{\cbottom{\opord}{\idxi}}{\stackw}$,
when defined.
When $\idxi=0$, the link is considered \emph{null}.
We often omit irrelevant collapse links from characters to improve readability.


When $\stacku$ is a $(\opord-1)$-stack and
$\stackv=\sbrac{\stackv_1\ldots\stackv_\numof}{\cpdsord}$
is an $\cpdsord$-stack with $\opord\in [1, \cpdsord]$, we define
$\ccompose{\stacku}{\opord}{\stackv}$
as the stack obtained by adding $\stacku$ on top of the topmost $\opord$-stack of $\stackv$.
Formally,
\[
    \ccompose{\stacku}{\opord}{\stackv} =
        \sbrac{\stacku\stackv_1\ldots\stackv_\numof}{\cpdsord}
    \text{ ($\opord= \cpdsord$)}
    \qquad
    \text{and}
    \qquad
    \ccompose{\stacku}{\opord}{\stackv} =
        \sbrac{(\ccompose{\stacku}{\opord}{\stackv_1})\stackv_2\ldots\stackv_\numof}{\cpdsord}
    \text{ ($\opord< \cpdsord$).}
\]

\begin{example} \label{eg:stack-ops}
    When
    $\ap{\chorder}{\cha} = 3$
    and
    $\ap{\chorder}{\chb} = 2$
    let
    $\stackw =
        \sbrac{
            \sbrac{
                \sbrac{\annot{\cha}{1} \annot{\chb}{1}}{1}
                \sbrac{\annot{\chb}{1}}{1}
            }{2}
            \sbrac{\sbrac{\annot{\chb}{0}}{1}}{2}
         }{3}$
    be an order-$3$ collapsible stack.
    The destination of the topmost link is
    $\ap{\cbottom{3}{1}}{\stackw}=\sbrac{
        \sbrac{\sbrac{\annot{\chb}{0}}{1}}{2}
    }{3}$.
    Furthermore,
    $\ap{\cbottom{2}{1}}{\stackw}=\sbrac{
        \sbrac{
            \sbrac{\annot{\chb}{1}}{1}
        }{2}
        \sbrac{\sbrac{\annot{\chb}{0}}{1}}{2}
    }{3}$
    and
    $\ap{\ctop{2}}{\stackw}=\sbrac{\annot{\cha}{1} \annot{\chb}{1}}{1}$.
    Here, $\ccompose{\ap{\ctop{2}}{\stackw}}{2}{\ap{\cbottom{2}{1}}{\stackw}}= \stackw$.
\end{example}
\subsubsection*{Operations on Order-$\cpdsord$ Collapsible Stacks}

CPDS are programs with a finite control acting on collapsible stacks via the operations:
\[
    \cops{\cpdsord}
    =
    \set{\push{2},\ldots,\push{\cpdsord}} \cup
    \setcomp{\cpush{\cha}, \rew{\cha}}
            {\cha \in \alphabet} \cup
    \set{\pop{1},\ldots,\pop{\cpdsord}} \cup
    \set{\collapse}.
\]
Operations $\push{\opord}$ of order $\opord>1$ copy the topmost element of the topmost order-$\opord$ stack.
Order-$1$ push operations $\cpush{\cha}$ push $\cha$ onto the topmost order-$1$ stack and annotate it with an order-$\ap{\chorder}{\cha}$ collapse link.
When executed on a stack $\stackw$, the link destination is $\ap{\pop{\ap{\chorder}{\cha}}}{\stackw}$.
A $\pop{\opord}$ removes the topmost element from the topmost order-$\opord$ stack.
The rewrite $\rew{\cha}$ modifies the topmost stack character while maintaining the link (rewrite must respect the link order).
Collapse, when executed on $\annot{\cha}{\idxi}$ with $\ap{\chorder}{\cha} = \opord$, pops the topmost order-$\opord$ stack down to the last $\idxi$ elements, captured by $\cbottom{\opord}{\idxi}$.
Formally, for an order-$\cpdsord$ stack $\stackw$:
\begin{enumerate}
\item
    $\ap{\push{\opord}}{\stackw} =
        \ccompose{\ap{\ctop{\opord}}{\stackw}}{\opord}{\stackw}$.

\item
    $\ap{\cpush{\cha}}{\stackw} =
        \ccompose{\cha^{\numof-1}}{1}{\stackw}$
    when $\ap{\ctop{\opord+1}}{\stackw}
        = \sbrac{\stackw_1\ldots\stackw_\numof}{\opord}$,
    where $\opord=\ap{\chorder}{\cha}$ is the link order,

\item
    $\ap{\pop{\opord}}{\stackw} = \stackv$
    when
    $\stackw = \ccompose{\stacku}{\opord}{\stackv}$,

\item
    $\ap{\collapse}{\stackw} =
        \ap{\cbottom{\opord}{\idxi}}{\stackw}$
    when
    $\ap{\ctop{1}}{\stackw} = \cha^{\idxi}$
    and
    $\ap{\chorder}{\cha} = \opord$,
    and

\item
    $\ap{\rew{\chb}}{\stackw} =
        \ccompose{\annot{\chb}{\idxi}}{1}{\stackv}$
    when
    $\stackw = \ccompose{\annot{\cha}{\idxi}}{1}{\stackv}$
    and
    $\ap{\chorder}{\cha} = \ap{\chorder}{\chb}$.
\end{enumerate}

\noindent
Note that since our definition of stacks does not permit empty stacks, $\pop{\opord}$ is undefined if $\stackv$ is empty and $\collapse$ is undefined when $\idxi = 0$.
Thus, the empty stack cannot be reached using CPDS operations;
instead, the offending operation will simply be unavailable.
Likewise if a rewrite operation would change the order of the link.
\begin{example}
    Recall Example~\ref{eg:stack-ops} and that
    $\stackw =
    \sbrac{
        \sbrac{
            \sbrac{\annot{\cha}{1} \annot{\chb}{1}}{1}
            \sbrac{\annot{\chb}{1}}{1}
        }{2}
        \sbrac{\sbrac{\annot{\chb}{0}}{1}}{2}
     }{3}$.
    Given the order-$3$ link $1$ of the topmost stack character $\cha$, a collapse operation yields
    $\stacku=\sbrac{
        \sbrac{\sbrac{\annot{\chb}{0}}{1}}{2}
    }{3}$.
    Now
    $\ap{\push{3}}{\stacku}=\sbrac{
        \sbrac{\sbrac{\annot{\chb}{0}}{1}}{2}
        \sbrac{\sbrac{\annot{\chb}{0}}{1}}{2}
         }{3}$.
    A $\cpush{\cha}$ on this stack results in
    $\stackv = \sbrac{\sbrac{
        \sbrac{\annot{\cha}{1}\annot{\chb}{0}}{1}}{2}
        \sbrac{\sbrac{\annot{\chb}{0}}{1}}{2}
    }{3}$.
    We have $\ap{\pop{3}}{\stackv}=\stacku=\ap{\collapse}{\stackv}$.
\end{example}

There is a subtlety in the interplay of collapse links and higher-order pushes.
For a $\push{\opord}$, links pointing outside of
$\stacku = \ap{\ctop{\opord}}{\stackw}$
have the same destination in both copies of $\stacku$, while links pointing within $\stacku$ point to different sub-stacks.

\begin{remark}[$\noop$]
    For convenience we use an operation $\noop$ which has no effect on the stack.
    We can simulate it by $\rew{\cha}$ where $\cha$ is the topmost character
    (by the format of rules, below, we will always know the topmost character when applying an operation).
    Hence, it is not a proof case.
\end{remark}
\subsection{Collapsible Pushdown Systems and Games}

\begin{definition}[CPDS]
An order-$\cpdsord$ \emph{collapsible pushdown system} is a tuple $\cpds$ given by
$\tup{\controls, \alphabet, \chorder, \cpdsrules,
      \controlinit, \chainit, \rankf}$
with
    $\controls$ a finite set of control states with
    initial control state $\controlinit$,
    $\alphabet$ a finite stack alphabet with initial stack character
        $\chainit$
        and order function $\chorder$,
    $\rankf:\controls\rightarrow\naturals$
        a function assigning ranks to $\controls$, and
    $\cpdsrules \subseteq
            \brac{\controls \times \alphabet \times \cops{\cpdsord} \times \controls}$
     a set of rules.
     The size is
     $\sizeof{\cpds}=\sizeof{\controls}+\sizeof{\alphabet}$.
    The remaining entries polynomially depend on $\controls$ and $\alphabet$
    (note that $\cpdsord$ is fixed).
\end{definition}
We attach ranks to CPDS instead of games as we later need the notion of \emph{rank-aware} CPDS.

A \emph{configuration} of a CPDS is a pair
$\configc=\config{\control}{\stackw}$ with $\control \in \controls$ and a stack $\stackw \in \stacks{\cpdsord}$.
We have a transition
$\config{\control}{\stackw} \cpdstran \config{\control'}{\stackw'}$
if there is a rule
$\cpdsrule{\control}{\cha}{\genop}{\control'}\in \cpdsrules$
with
$\ap{\ctop{1}}{\stackw} = \cha$
and
$\stackw' = \ap{\genop}{\stackw}$.
The initial configuration is $\config{\controlinit}{\stackinit}$
where
$\stackinit = \sbrac{\ldots\sbrac{\annot{\chainit}{0}}{1}\ldots}{\cpdsord}$.
To begin from another configuration, one can adjust the CPDS rules to build the required stack from the initial configuration.
A computation is a sequence of configurations $\configc_0, \configc_1, \ldots$ where $\configc_0=\config{\controlinit}{\stackinit}$ and $\configc_i\cpdstran \configc_{i+1}$ for all $i\in \naturals$.
Recall, transitions cannot empty a stack or rewrite the order of a link.
\begin{definition}[Games over CPDS]
    A \emph{game over a CPDS} is a tuple
    $\game = \tup{\cpds, \owner, \wincond}$,
    where $\cpds$ is a CPDS,
        $\owner : \controls \rightarrow \set{\aplayer,\eplayer}$
            is a division of the control states of $\cpds$ by owner \theeplayer ($\eplayer$) or \theaplayer ($\aplayer$), and
        $\wincond \subseteq \naturals^{\omega}$
            is a winning condition.
            The size of the game is $\sizeof{\game}=\sizeof{\cpds}$.

    We call $\game$ a \emph{safety game} if
            $\ap{\rankf}{\control} \in \set{1,2}$
            for all
            $\control \in \controls$
            and
            $\wincond = 2^\omega$.
            It is
            a \emph{parity game} if $\wincond$ is the set of all sequences such that the smallest infinitely occurring rank is even.
\end{definition}

We refer to computations as plays and require them to be infinite.
This means every configuration
$\config{\control}{\stackw}$
has some successor
$\config{\control'}{\stackw'}$.
This does not lose generality as we can add to the CPDS transitions to a losing (as defined next) sink state (with self-loop) for
$\ap{\owner}{\control}$
from any configuration
$\config{\control}{\stackw}$.
A play $\config{\control_0}{\stackw_0},
    \config{\control_1}{\stackw_1},
    \config{\control_2}{\stackw_2},
    \ldots$
is won by \theeplayer, if its sequence of ranks satisfies the winning condition, i.e.\ %
$
    \ap{\rankf}{\control_0}
    \ap{\rankf}{\control_1}
    \ap{\rankf}{\control_2}
    \ldots
    \in \wincond.
$
Otherwise, \theaplayer wins.
When a play reaches $(\control, \stackw)$, then the owner of $\control$ chooses the rule to apply.
A \emph{strategy} for player
$\genplayer \in \set{\eplayer, \aplayer}$
is a function
$
    \strat : \brac{\controls \times \stacks{\cpdsord}}^\ast
             \rightarrow
             \cpdsrules \
$
that returns an appropriate rule based on the prefix of the play seen so far.
A play
$
    \config{\control_0}{\stackw_0},
    \config{\control_1}{\stackw_1},
    \config{\control_2}{\stackw_2},
    \ldots
$
is according to $\strat$ if for all $\idxi$ with
$\ap{\owner}{\control_\idxi} = \genplayer$
we have
$\config{\control_\idxi}{\stackw_\idxi}
 \cpdstran
 \config{\control_{\idxi+1}}{\stackw_{\idxi+1}}$
via rule
$\ap{\strat}{
    \config{\control_0}{\stackw_0},
    \ldots,
    \config{\control_\idxi}{\stackw_\idxi}
 }$.
The strategy is winning if all plays according to $\strat$ are won by $\genplayer$.
We say a player \emph{wins} a game if they have a winning strategy from the initial configuration.

\subsection{Rank-Aware Collapsible Pushdown Systems}

We will often need to access the smallest rank that was seen in a play since some stack was created.
\emph{Rank-aware} CPDS record precisely this information~\cite{HMOS08}.
We first define $\opord$-ancestors which, intuitively, give the position in the play where the top order-$(\opord-1)$ stack was pushed.
Note, in the definition below, the integer $\idxj$ is unrelated to the collapse links.
\begin{definition}[$\opord$-Ancestor]
    Let $\opord \in [2, \cpdsord]$ (resp.\ $\opord = 1$).
    Given a play $\configc_0, \configc_1, \ldots$
    we attach an integer $\idxj$ to every order-$(\opord-1)$ stack as follows.
    In $\configc_0$ all order-$(\opord-1)$ stacks are annotated by $0$.
    Suppose $\configc_{\idxi+1}$ was obtained from $\configc_\idxi$ using operation
    $\push{\opord}$
    (resp.\ $\cpush{\cha}$).
    Then the new topmost order-$(\opord-1)$ stack in $\configc_{\idxi+1}$ is annotated with $\idxi$.
    If $\configc_{\idxi+1}$ is obtained via a $\push{\opord''}$ with $\opord'' > \opord$, then all annotations on the order-$(\opord-1)$ stacks in the copied stack are also copied.

    The \emph{$\opord$-ancestor} with $\opord\in [1, \cpdsord]$ of $\configc_\idxi$ is the configuration $\configc_\idxj$ where $\idxj$ is the annotation of the topmost order-$(\opord-1)$ stack in $\configc_\idxi$.
    Let
    $\ap{\ctop{1}}{\configc_{\idxi}}=\cha^{\numof}$
    and
    $\ap{\chorder}{\cha} = \opord'$.
    The \emph{link ancestor} of $\configc_\idxi$  is the $\opord'$-ancestor of the $1$-ancestor of $\configc_{\idxi}$.
\end{definition}
Applying a $\pop{\opord}$ operation will expose (a copy of) the topmost $(\opord-1)$-stack of the $\opord$-ancestor.
To understand the notion of a link ancestor, remember that $\collapse$ executed on a stack whose topmost order-$0$ stack is
$\annot{\cha}{\numof}$
with
$\ap{\chorder}{\cha} = \opord'$
has the effect of executing $\pop{\opord'}$ several times.
The newly exposed topmost $(\opord'-1)$-stack is the same that would be exposed if $\pop{\opord'}$ were applied at the moment the $\cha$ character was pushed.
This exposed stack is the same stack as is topmost on the $\opord'$-ancestor of the $1$-ancestor of $\cha$.
We illustrate this with an~example.

\begin{example}
    Assume some $\configc_0$.
    Now take some $\configc_1$ containing the stack
    $\stackw_1 = \sbrac{\sbrac{\sbrac{\annot{\chb}{0}}{1}}{2}}{3}$.
    Apply a $\push{3}$ operation to obtain $\configc_2$ with stack
    $\stackw_2 = \sbrac{
        \sbrac{\sbrac{\annot{\chb}{0}}{1}}{2}
        \sbrac{\sbrac{\annot{\chb}{0}}{1}}{2}
    }{3}$.
    Note, the topmost
    $\sbrac{\sbrac{\annot{\chb}{0}}{1}}{2}$ has $3$-ancestor $\configc_1$.

    Now, let
    $\ap{\chorder}{\cha} = 3$ and
    obtain $\configc_3$ with $\cpush{\cha}$, which thus contains the stack
    $\stackw_3 =
    \sbrac{
        \sbrac{\sbrac{\annot{\cha}{1} \annot{\chb}{0}}{1}}{2}
        \sbrac{\sbrac{\annot{\chb}{0}}{1}}{2}
     }{3}$
    where the $\annot{\cha}{1}$ has the $1$-ancestor $\configc_2$.

    We can now apply $\push{3}$ again to reach $\configc_4$ with stack
    $\stackw_4 =
    \sbrac{
        \sbrac{\sbrac{\annot{\cha}{1} \annot{\chb}{0}}{1}}{2}
        \sbrac{\sbrac{\annot{\cha}{1} \annot{\chb}{0}}{1}}{2}
        \sbrac{\sbrac{\annot{\chb}{0}}{1}}{2}
     }{3}$.
    Note that both copies of $\annot{\cha}{1}$ have the $1$-ancestor $\configc_2$.
    Moreover, the link ancestor of both is $\configc_1$.  That is, the $3$-ancestor of the topmost stack of $\configc_2$.
    In particular, applying $\collapse$ at $\configc_4$ results in a configuration with stack
    $\sbrac{\sbrac{\sbrac{\annot{\chb}{0}}{1}}{2}}{3}$,
    which is the same stack contained in $\configc_1$.
\end{example}

In the below, intuitively, the level-$\opord$ rank is the smallest rank seen since the topmost $(\opord-1)$ stack was created.
Similarly for the link-level rank.
Our rank-awareness definition is from~\cite{HMOS08} but includes level-$\opord$ ranks as well.
\begin{definition}[Level Rank]
    For a given play
    $\configc_0, \configc_1, \ldots$
    the \emph{level-$\opord$ rank} with $\opord\in [1, \cpdsord]$ (resp.
    \emph{link-level rank}) at a configuration $\configc_\idxi$ is the smallest rank of a control state in the sequence
    $\configc_{\idxj+1},\ldots,\configc_\idxi$
    where $\idxj$ is the $\opord$-ancestor (link ancestor) of $\configc_\idxi$.
\end{definition}

In the following definition, $\link$ is a special symbol to be read as \emph{link}.

\begin{nameddefinition}{def:rank-aware}{Rank-Aware}
    A \emph{rank-aware CPDS} is a CPDS $\cpds$ over stack characters $\tup{\cha, \colrankfun}$, where $\cha$ is taken from a finite set and function $\colrankfun$ has type
    $\colrankfun : {[1, \cpdsord]\dotcup\set{\link}} \rightarrow [0, \maxrank]$
    (with $\maxrank$ the highest rank of a state in $\cpds$).
    The requirement is that in all computations $\configc_0, \configc_1, \ldots$ of the CPDS
    all configurations $\configc_\idxi=\tup{\control, \stackw}$ with top-of-stack character $\tup{\cha, \colrankfun}$ satisfy
    \[
        \ap{\colrankfun}{\opord} = \text{ the level-$\opord$ rank at $\configc_\idxi$, $\opord\in[1, \cpdsord]$},
        \quad
        \text{and}
        \quad
        \ap{\colrankfun}{\link} = \text{ the link-level rank at $\configc_\idxi$.}
    \]
\end{nameddefinition}
Below, we slightly generalize a lemma from~\cite{HMOS08} to include safety games and level-$\opord$ ranks.
Intuitively, we can obtain rank-awareness by keeping track of the required information in the stack characters and control states.
\begin{namedlemma}{lem:rank-aware}{Rank-Aware}
    Given a parity (resp.\ safety) game over CPDS $\cpds$ of order-$\cpdsord$,
    one can construct in polynomial time a rank-aware CPDS $\cpds'$ of the same order and a parity (resp.\ safety) game over $\cpds'$ such that \theeplayer wins the game over $\cpds$ iff he wins the game over $\cpds'$.
\end{namedlemma}
Note, the number of functions $\colrankfun$ is exponential in $\cpdsord$.
However, since $\cpdsord$ is fixed the construction is polynomial.
In the sequel, we will assume that all CPDS are rank-aware.

\section{Main Result and Proof Outline}
\label{sec:outline}
In the following sections, we define a reduction $\polyred{}$ which takes a parity game $\game$ over a CPDS and returns a safety game
$\ap{\polyred{}}{\game}$ of the same order.
The main result follows.
\begin{theorem}[From Parity to Safety, Efficient]
Given a parity game $\game$, $\theeplayer$ wins $\game$ iff he wins $\ap{\polyred{}}{\game}$.
$\ap{\polyred{}}{\game}$ is polynomially large and computable in time polynomial in the size of $\game$.
\end{theorem}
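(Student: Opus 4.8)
The plan is to build $\ap{\polyred{}}{\game}$ in two stages --- first a \emph{lifted counter reduction} to a safety game over a CPDS of the same order, then a \emph{compact encoding} of that reduction --- and to derive correctness from a commutation argument with an order-lowering transformation. For the first stage I would start from the counter reductions of Bernet\etal~\cite{BJW02} and of Fridman and Zimmermann~\cite{FZ12}: augment the arena with a counter $c_r$ for every odd rank $r$, where $c_r$ counts visits to rank $r$ since the last visit to a rank $<r$ (such a visit resetting $c_r$), and let \theeplayer lose as soon as some $c_r$ exceeds a threshold large enough that exceeding it certifies an odd rank recurring without an answering smaller rank. Over a CPDS the naive threshold --- roughly ``the number of relevant configurations'' --- is infinite, so the key first idea is to exploit rank-awareness (Lemma~\ref{lem:rank-aware}): a rank-aware CPDS stores in its top-of-stack character the smallest rank seen since each level-$\opord$ stack and since the link stack were created, so the operations $\pop{\opord}$ and $\collapse$ can reset exactly the counters that the popped prefix of the play would have reset, without replaying that prefix. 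This already yields a correct safety game over a CPDS of order $\cpdsord$, but with counters bounded only by an $\cpdsord$-fold exponential, hence not yet polynomial.

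For the second stage --- correctness of the lifted reduction --- I would avoid reasoning about order-$\cpdsord$ plays directly and instead show that the lifted counter reduction commutes, up to a winner-preserving isomorphism of arenas, with the order reduction from order-$\cpdsord$ to order-$(\cpdsord-1)$ games of~\cite{W96,HMOS08}: lowering the order and then applying the lower-order counter reduction yields essentially the same game as applying the counter reduction first and then lowering the order. An induction on $\cpdsord$ then reduces correctness to the order-$1$, indeed finite-state, case, which is the theorem of Bernet\etal~\cite{BJW02}. I expect this commutation to be the main obstacle: one has to track how the order reduction's guesses about future $\pop{\cpdsord}$ moves interact with the counter increments and resets, and verify that the level-$\opord$ and link-level rank data of the order-$\cpdsord$ system is faithfully reflected by the order-$(\cpdsord-1)$ game the order reduction produces, so that the lower-order instance is again a counter reduction of a rank-aware system.

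For the third stage I would shrink the exponential counters to a polynomial-size gadget, using the observation that a rank counter --- under increment, reset-to-zero and comparison with a fixed $\cpdsord$-fold-exponential threshold --- behaves like a stack: increment is a push, reset is popping an entire order-appropriate substack, overflow is a fullness test. Encoding $\cpdsord$-fold exponential numbers in an order-$\cpdsord$ collapsible stack as in~\cite{CW07}, all the rank counters can be maintained inside the CPDS's own stack with only polynomial blow-up of control states and alphabet, the increments and resets being realised by CPDS operations (with collapse links used to jump to the digit positions a carry or a reset must touch). Here one checks that these auxiliary manipulations run in dedicated phases that leave the simulated configuration of $\cpds$ untouched and that an overflow is always detected before a move of the original game could be mis-simulated.

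Composing the stages, $\ap{\polyred{}}{\game}$ is the safety game obtained by first making $\cpds$ rank-aware via Lemma~\ref{lem:rank-aware} and then applying the compactly-encoded counter reduction. Each step is polynomial in $\sizeof{\game}$ because $\cpdsord$ is fixed, so $\ap{\polyred{}}{\game}$ is polynomially large and polynomial-time computable. For the equivalence, \theeplayer wins $\game$ iff he wins the rank-aware game (Lemma~\ref{lem:rank-aware}) iff he wins the lifted counter game (second stage) iff he wins $\ap{\polyred{}}{\game}$ (faithfulness of the compact encoding from the third stage).
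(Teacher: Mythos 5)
Your proposal follows essentially the same route as the paper: lift the Fridman--Zimmermann counter reduction to order~$\cpdsord$ via rank-awareness (the paper's $\counterred{\bound}$ and Lemma~\ref{lem:parityvssafety}), prove it correct by showing it commutes with the order-lowering reduction and inducting on $\cpdsord$ (Lemma~\ref{lem:commutes}), and then compress the $\cpdsord$-fold exponential counters into the stack via the Cachat--Walukiewicz encoding, exploiting the stack discipline of increments and resets (Lemma~\ref{lem:poly-equiv}). The only cosmetic deviations are that the paper's induction bottoms out at order~$1$ with Fridman--Zimmermann rather than at a finite-state arena with Bernet~et~al.\ (order-$1$ games are pushdown, not finite-state), and that the counter gadget in the paper manipulates digits by plain pops, rewrites and pushes without using collapse links.
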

We outline how to define $\polyred{}$ and prove it correct.
First, we give a function $\counterred{\bound}$ reducing an order-$\cpdsord$ parity game to an equivalent order-$\cpdsord$ safety game.
It extends Fridman and Zimmermann's reduction~\cite{FZ12} from first order to higher orders.
In the finite-state setting, a related reduction appeared already in~\cite{BJW02}.
The idea is to count in the stack characters the occurrences of odd ranks.
\theeplayer has to keep the counter values below $\bound$, a threshold that is a parameter of the reduction.
For completeness, this threshold has to be $n$-fold exponential in the size of $\game$.
Let
$\ap{\nexp{0}}{f} = f$
and
$\ap{\nexp{\cpdsord}}{f} = 2^{\ap{\nexp{\cpdsord-1}}{f}}$.
We have the following lemma.
\begin{namedlemma}{lem:parityvssafety}{From Parity to Safety, Inefficient}
    Given a parity game $\game$ played over an order-$\cpdsord$ CPDS, there is a bound
    $\ap{\bound}{\game} = \ap{\nexp{\cpdsord}}{\ap{f}{\sizeof{\game}}}$
    for some polynomial $f$ so that for all $\bound \geq \ap{\bound}{\game}$ \theeplayer wins $\game$ iff he wins the safety game
    $\ap{\counterred{\bound}}{\game}$.
\end{namedlemma}
The size of
$\ap{\counterred{\bound}}{\game}$
is not polynomial, even for constant~$\bound$.
The next step is to give an efficient reduction $\polyred{\bound}$
producing a safety game equivalent to $\ap{\counterred{\bound}}{\game}$.
In particular $\ap{\polyred{\ap{\bound}{\game}}}{\game}$ can be computed in time polynomial only in the size of $\game$, not in $\ap{\bound}{\game}$.
Thus, we can define $\polyred{}$ from the main theorem to be
$\ap{\polyred{}}{\game}=\ap{\polyred{\ap{\bound}{\game}}}{\game}$.

Technically, $\polyred{}$ relies on the insight that counter increments as performed by $\counterred{\bound}$ follow a stack discipline.
Incrementing the $\rank$th counter resets all counters for $\rank' > \rank$ to zero.
The upper bound combines this with the fact that collapsible pushdown systems can encode large counters~\cite{CW07}.
The second step is summarized as follows.
\begin{namedlemma}{lem:poly-equiv}{From Inefficient to Efficient}
    \theeplayer wins
    $\ap{\counterred{\bound}}{\game}$
    iff he wins
    $\ap{\polyred{\bound}}{\game}$.
    Moreover,
    $\ap{\polyred{}}{\game}=\ap{\polyred{\ap{\bound}{\game}}}{\game}$
    is polynomial-time computable.
\end{namedlemma}
It should be clear that the above lemmas, once proven, yield the main theorem.
For the equivalence stated there, note that $\ap{\polyred{}}{\game}=\ap{\polyred{\ap{\bound}{\game}}}{\game}$ is equivalent to the game $\ap{\counterred{\ap{\bound}{\game}}}{\game}$ by Lemma~\ref{lem:poly-equiv}.
This game, in turn, is equivalent to $\game$ by Lemma~\ref{lem:parityvssafety}.

The proof of Lemma~\ref{lem:poly-equiv} will be direct and is given in Section~\ref{sec:polynomial}.
We explain the proof of Lemma~\ref{lem:parityvssafety} here, which relies on a third reduction.
We define a function called $\orderred$ that takes an order-$\cpdsord$ parity or safety game and produces an equivalent order-($\cpdsord-1$) parity or safety game.
The reduction already appears in~\cite{HMOS08}, and generalizes the one from~\cite{W96}.
Let
\[
    \begin{array}{llllll}
        \gameo &=& \ap{\orderred}{\game}, &
        \gamec &=& \ap{\counterred{\bound}}{\game}, \\
        \gameoc &=& \ap{\counterred{\bound}}{\ap{\orderred}{\game}}, &
        \gameco &=& \ap{\orderred}{\ap{\counterred{\bound}}{\game}} \ . \\
    \end{array}
\]
The proof of Lemma~\ref{lem:parityvssafety} chases the diagram below.
We rely on the observation that the games
$\ap{\counterred{\bound}}{\ap{\orderred}{\game}}$
and
$\ap{\orderred}{\ap{\counterred{\bound}}{\game}}$
are equivalent, as stated in Lemma~\ref{lem:commutes}.
The proof of Lemma~\ref{lem:commutes} needs the reductions and can be found in Section~\ref{sec:equivalence}.
The commutativity argument yields the following proof, almost in category-theoretic style.

\begin{center}
\begin{tikzpicture}[x=40ex,y=11ex, every node/.style={fill=white},semithick,>=stealth];
    \node (G) at (0,1) {$\game$};
    \node (C) at (1,1) {$\gamec$};
    \node (O) at (0,0) {$\gameo$};
    \node (OC) at (.7,0) {$\gameoc$};
    \node (CO) at (1,0) {$\gameco$};
    \draw [->, style=dashed] (G) -- node {$\counterred{\bound}$} (C);
    \draw [->] (G) -- node {$\orderred$} (O);
    \draw [->] (O) -- node {$\counterred{\bound}$} (OC);
    \draw [->] (C) -- node {$\orderred$} (CO);
    \draw [implies-implies,double equal sign distance] (CO) -- (OC);
\end{tikzpicture}
\end{center}

\begin{namedlemma}{lem:commutes}{$\gameoc$ vs. $\gameco$}
    Given $\bound \in \naturals$ and a parity game $\game$ over an order-$\cpdsord$ CPDS,
    \theeplayer wins $\ap{\counterred{\bound}}{\ap{\orderred}{\game}}$ iff \theeplayer wins $\ap{\orderred}{\ap{\counterred{\bound}}{\game}}$.
\end{namedlemma}

\begin{proof}[Proof of Lemma~\ref{lem:parityvssafety}]
    We induct on the order.
    At order-$1$, the result is due to Fridman and Zimmerman~\cite{FZ12}.
    For the induction, without the bound, at order-$\cpdsord$, take a winning strategy for \theeplayer in $\game$.
    By~\cite{HMOS08}, he has a winning strategy in $\gameo$.
    By induction, \theeplayer has a winning strategy in $\gameoc$ and by Lemma~\ref{lem:commutes} also in $\gameco$ when $\bound$ is suitably large.
    Finally, again by~\cite{HMOS08}, \theeplayer can win $\gamec$.
    I.e., we chase the diagram above from
    $\game$ to $\gameo$ to $\gameoc$ to $\gameco$ and then up to $\gamec$.
    To prove the opposite direction, simply follow the path in reverse.

    We obtain the required bound in Appendix~\ref{sec:bound-calc}.
    Intuitively, we have an exponential bound at order-$1$ by Fridman and Zimmerman.
    Thus, assume by induction we have a $(\cpdsord-1)$-fold exponential bound for order-$(\cpdsord-1)$.
    From an order-$\cpdsord$ system we obtain an exponentially large order-$(\cpdsord-1)$ system for which an $\cpdsord$-fold exponential bound is therefore needed.
\end{proof}
In Sections~\ref{sec:orderreduction} and~\ref{sec:counterreduction}, we define $\orderred$ and $\counterred{\bound}$, and show Lemma~\ref{lem:commutes} in Section~\ref{sec:equivalence}.
The reduction $\polyred{}$ is defined in Section~\ref{sec:polynomial}, which also sketches the proof of Lemma~\ref{lem:poly-equiv}.

\section{Order Reduction} 
\label{sec:orderreduction}

We recall the reduction of~\cite{HMOS08} from order-$\cpdsord$ to order-$(\cpdsord-1)$ parity games.
This reduction also works for safety games.
It is a natural extension of Carayol\etal~\cite{CHMOS08} for higher-order pushdown systems without collapse, which extended Walukiewicz's reduction of pushdown parity games to finite-state parity games~\cite{W96}.
Due to space constraints, we only give the intuition here.
It is useful when explaining the motivation behind the constructions in our parity to safety reduction.
The full construction is in Appendix~\ref{sec:appendix-orderreduction}.

Given an order-$\cpdsord$ CPDS $\cpds$ and a game 
$\game = \tup{\cpds, \owner, \wincond}$
we define an order-$(\cpdsord-1)$ game
$\ap{\orderred}{\game}$
over a CPDS $\orcpds$.
The order-$(\cpdsord-1)$ CPDS $\orcpds$ simulates $\cpds$.
The key operations are
    $\push{\cpdsord}$, 
    $\pop{\cpdsord}$, 
    $\cpush{\cha}$ with $\ap{\chorder}{\cha} = \cpdsord$,
        and
    $\collapse$ when the link is order-$\cpdsord$.
We say these operations are order-$\cpdsord$. 
The remaining operations are simulated directly on the stack of $\cpds'$.

There is no $\push{\cpdsord}$ on an order-\mbox{$(\cpdsord-1)$} stack.
Instead, observe that if the stack is $\stackw$ before the $\push{\cpdsord}$ operation, it will return to $\stackw$ after the corresponding $\pop{\cpdsord}$ (should it occur).
Thus, we simulate $\push{\cpdsord}$ by splitting the play into two branches.
The first simulates the play between the $\push{\cpdsord}$ and corresponding $\pop{\cpdsord}$.  The second simulates the play after the $\pop{\cpdsord}$.

Instead of applying a $\push{\cpdsord}$ operation, \theeplayer makes a claim about the control states the play may pop to.
Also necessary is information about the smallest rank seen in the play to the pop. 
This claim is recorded as a vector of sets of control states
$\popguess = \tup{\controlset_0, \ldots, \controlset_\maxrank}$ which is held
in the current control state.
Each 
$\control \in \controlset_\rank$
is a potential future of the play, meaning that the pushed stack may be popped to $\control$ and the minimum rank seen since the push could be $\rank$.
Because \theeplayer does not have full control of the game, he cannot give a single control state and rank: \theaplayer may force him to any of a number of situations.

Once this guess has been made, \theaplayer chooses whether to simulate the first play (between the push and the pop) or the second (after the pop).
In the first case, $\popguess$ is stored in the control state.
Then, when the pop occurs, \theeplayer wins if the destination control state is in $\controlset_\rank$ where $\rank$ is the minimum rank seen (his claim was correct).
In the second case, \theaplayer picks a rank $\rank$ and moves the play directly to some control state in $\controlset_\rank$.
This move has rank $\rank$ (as the minimum rank seen needs to contribute to the parity/safety condition).
In both cases, the topmost order-$(\cpdsord-1)$ stack does not change (as it would be the same in both plays).

To simulate a 
$\cpush{\cha}$
with
$\ap{\chorder}{\cha} = \cpdsord$
and a corresponding $\collapse$ we observe that the stack reached after the collapse is the same as that after a $\pop{\cpdsord}$ applied directly.
Thus, the simulation is similar.
To simulate the play up to the collapse, the current target set $\popguess$ is stored with the new stack character $\cha$.
Then \theeplayer wins if a move performs a $\collapse$ to a control state $\control \in \controlset_\rank$, where $\rank$ is the smallest rank seen since the order-$(\cpdsord-1)$ stack, that was topmost at the moment of the original $\cpush{\cha}$, was pushed.
To simulate the play after the collapse, we can simulate a $\pop{\cpdsord}$ as above.

\section{Counter Reduction}
\label{sec:counterreduction}

We reduce parity to safety games, generalizing Fridman and Zimmermann~\cite{FZ12} which extended Bernet\etal~\cite{BJW02}.
This reduction is not polynomial and we show in Section~\ref{sec:polynomial} how to achieve the desired complexity.
Correctness is \reflemma{lem:parityvssafety}.

We give the intuition here, with the detailed definition of $\ap{\counterred{\bound}}{\game}$ appearing in Appendix~\ref{sec:appendix-counterreduction}.
The reduction maintains a counter for each odd rank, which can take any value between $0$ and $\bound$.
We also detail the counters below as they are needed in Section~\ref{sec:polynomial}.

The insight of Bernet\etal is that, in a finite-state parity game of $\numof$ states, if \theaplayer can force the play to pass through some odd rank $\rank$ for $\numof+1$ times without visiting a state of lower rank in between, then some state $\control$ of rank $\rank$ is visited twice.
Since parity games permit positional winning strategies, \theaplayer can repeat the play from $\control$ ad infinitum.
Thus, the smallest infinitely occurring rank must be $\rank$, and \theaplayer wins the game.

Thus, \theeplayer plays a safety game: he must avoid visiting  an odd rank too many times without a smaller rank being seen.
In the safety game, counters
\[
    \countervec= \tup{\counter_1, \counter_3, \ldots, \counter_\maxrank}
\]
are added to the states, one for each odd rank.
When a rank $\rank$ is seen, then, if it is odd, $\counter_\rank$ is incremented.
Moreover, whether $\rank$ is odd or even, all counters
$\counter_{\rank'}$
for $\rank' > \rank$ are reset to $0$.

As the number of configurations is infinite, Bernet's insight does not immediately generalize to pushdown games.
However, Fridman and Zimmermann observed that, from Walukiewicz~\cite{W96}, a pushdown parity game can be reduced to a finite-state parity game (of exponential size) as described in the previous section.
This finite-state parity game can be further reduced to a safety game with the addition of counters.
Their contribution is then to transfer back the counters to the pushdown game, with the following reasoning.

Recall, a push move at
$\tup{\control, \sbrac{\cha \stackw}{1}}$
is translated into a branch from a corresponding state
$\tup{\control, \cha, \popguess}$
in the finite-state game.
There are several moves from
$\tup{\control, \cha, \popguess}$, some of them simulate the push, the remaining moves simulate the play after the corresponding pop.
When augmented with counters the states take the form
$\tup{\control, \cha, \popguess, \countervec}$.
We see that, when simulating the pop in the finite-state game, the counter values are the same as in the moment when the push is simulated.
That is, if we lift the counter construction to the pushdown game, after each pop move we need to reset the counters to their values at the corresponding push.
Thus we store the counter values on the stack.
For example, for a configuration
$\tup{\control, \sbrac{
    \tup{\cha, \countervec}
    \tup{\chb, \countervec'}
}{1}}$
where the current top of stack is $\cha$ and the current counter values are
$\countervec$, the counter values at the moment when $\cha$ was first pushed are stored on the stack as
$\countervec'$.

This reasoning generalizes to any order $\cpdsord$.
We store the counter values on the stack so that, when a $\pop{\opord}$ operation occurs, we can retrieve the counter values at the corresponding $\push{\opord}$, and similarly for $\collapse$.
Note also that, when reducing from order-$\cpdsord$ to order-\mbox{$(\cpdsord-1)$}, any branch corresponding to a play after a pop passes through a rank $\rank$ which is the smallest rank seen between the push and pop.
Thus, in the safety game, after each pop or collapse we need to update the counter values using $\rank$.
Hence we require a rank-aware CPDS.

Let $\maxrank$ be the maximum rank, and, for convenience, assume it is odd.
We maintain a vector of counters
$\countervec = \tup{\counter_1, \counter_3, \ldots, \counter_\maxrank}$,
one for each odd rank, stored in the stack alphabet as described above.
We update these counters with operations $\counterinc{\rank}$ that exist for all $\rank\in[0, m]$ (including the even ranks).
Operation $\counterinc{\rank}$ resets the counters $\counter_{\rank'}$ with $\rank' > \rank$ to zero.
If $\rank$ is odd, it moreover increments $\counter_\rank$.
If the bound is exceeded, an overflow occurs.
Formally, $\ap{\counterinc{\rank}}{\countervec}=\overflow$ if $\rank$ is odd and $\counter_\rank + 1 > \bound$.
Otherwise, $\ap{\counterinc{\rank}}{\countervec}=\countervec'$
where for each $\tilde \rank$
\[
    \counter'_{\tilde \rank} = \counter_{\tilde \rank}
    \text{ (if $\tilde \rank < \rank$)},
    \quad
    \counter'_{\tilde \rank} = \counter_{\rank} + 1
    \text{ (if $\tilde \rank = \rank$)},
    \text{ and}
    \quad
    \counter'_{\tilde \rank} = 0
    \text{ (if $\tilde \rank > \rank$).}
\]

\section{Equivalence Result}
\label{sec:equivalence}

We need equivalence of
$\gameoc=\ap{\counterred{\bound}}{\ap{\orderred}{\game}}$
and
$\gameco= \ap{\orderred}{\ap{\counterred{\bound}}{\game}}$
for Lemma~\ref{lem:commutes}.
The argument is that the two CPDS only differ in order of the components of their control states and stack characters.
A subtlety is that when $\counterred{\bound}$ is applied first, the contents of $\popguess$ are not control states of $\game$, but control states of $\gamec$.
However, the additional information in the control states after $\counterred{\bound}$ has to be consistent with $\popguess$, which means we can directly translate between guesses over states in the original CPDS, and those over states of the CPDS after the counter reduction.
The details are in Appendix~\ref{sec:equivalence-proof}.

\section{Polynomial Reduction}
\label{sec:polynomial}

For a game $\game$ over an order-$\cpdsord$ CPDS, the counters in the game
$\ap{\counterred{\ap{\bound}{\game}}}{\game}$
blow up $\game$ by an $\cpdsord$-fold exponential factor.
To avoid this we use the stack-like behaviour of the counters and a result due to Cachat and Walukiewicz~\cite{CW07}, showing how to encode large counter values into the stack of a CPDS with only polynomial overhead (in fact, collapse is not needed).
\subsection{Counter Encoding}

Cachat and Walukiewicz propose a binary encoding that is nested in the sense that a bit is augmented by its position, and the position is (recursively) encoded in the same way.
For example, number $5$ stored with $16$ bits is represented by
$(0, 1).(1, 0).(2, 1).(3, 0).(4, 0)\ldots(15, 0)$.
Since four bits are required to index $16$ bits, we encode position $4$ as
$(0, 0').(1, 0').(2, 1').(3, 0')$.
Finally, position $2$ of this encoding stored as $(0, 0'').(1, 1'')$.
The players compete to (dis)prove that the indexing is done properly.

Formally we introduce distinct alphabets to encode counters for all odd ranks $\rank$:
\begin{align*}
\counteralphabet{\rank}\ =\ \encodingalphabet{\rank} \cup \set{\bzero{\rank}, \bone{\rank}} \ .
\end{align*}
Here, $\encodingalphabet{\rank}$ is a polynomially-large set of characters for the indexing.
The set $\set{\bzero{\rank}, \bone{\rank}}$ are the bits to encode numbers.
Let $\counteralphabet{}$ be the union of all $\counteralphabet{\rank}$.

The values of the counters are stored on the order-$1$ stack, with the least-significant bit topmost.
The indices appear before each bit character.
E.g., value $16$ for counter $\rank$ stored with five bits yields a sequence from
$
    \encodingalphabet{\rank}^\ast\ .\ \bzero{\rank}\ .\ \encodingalphabet{\rank}^\ast\ .\ \bzero{\rank}\ .\ \encodingalphabet{\rank}^\ast\ .\ \bzero{\rank}\ .\
    \encodingalphabet{\rank}^\ast\ .\ \bzero{\rank}\ .\ \encodingalphabet{\rank}^\ast\ .\ \bone{\rank}
$.
Actually, the encoding will always use all bits, which means its length will be $(\cpdsord-1)$-fold exponential.

Cachat and Walukiewicz provide game constructions to assert properties of the counter encodings.
For this, play moves to a dedicated control state, from which \theeplayer wins iff the counters have the specified property.
In~\cite{CW07}, \theeplayer plays a reachability game from the dedicated state.
We need the dual, with inverted state ownership and a safety winning condition, where the target state of the (former) reachability game has rank $1$.
\theeplayer's goal will be to prove the encoding wrong (it violates a property) by means of safety,
\theaplayer tries to build up the counters correctly and, if asked, demonstrate correctness using reachability.

For all properties, the counter to be checked must appear directly on the top of the stack
(topmost on the topmost order-$1$ stack).
If any character outside $\counteralphabet{\rank}$ is found, \theaplayer loses.
When two counters are compared, the first counter must appear directly at the top of the stack, while the second may be separated from the first by any sequence of characters from outside $\counteralphabet{\rank}$
(these can be popped away).
The first character found from $\counteralphabet{\rank}$ begins the next encoding.
\theaplayer loses the game if none is found.
The required properties are listed below.
\begin{itemize}
\item
    Encoding Check ($\countercheck{\rank}$):
    For each rank $\rank$, we have a control state
    $\countercheck{\rank}$.
    \theaplayer can win the safety game from
    $\config{\countercheck{\rank}}{\stackw}$
    only if the topmost sequence of characters from
    $\counteralphabet{\rank}$
    is a correct encoding of a counter, in that all indices are present and correct.
\item
    Equals Check ($\equalscheck{\rank}$):
    For each $\rank$, we have a control state
    $\equalscheck{\rank}$, from which \theaplayer can win only if the topmost sequence of characters from
    $\counteralphabet{\rank}$
    is identical to the next topmost sequence of $\counteralphabet{\rank}$-characters.
    I.e., the two topmost $\rank$th counter encodings are equal.
\item
    Counter Increment:
    Cachat and Walukiewicz do not define increment but it can be done via the basic rules of binary addition.
    We force \theaplayer to increment the counter by first using $\pop{1}$ to remove characters from
    $\encodingalphabet{\rank} \cup \set{\bone{\rank}}$
    until $\bzero{\rank}$ is found.
    Then, \theaplayer must rewrite the $\bzero{\rank}$ to $\bone{\rank}$.
    \theaplayer then performs as many
    $\cpush{\cha}$
    operations as she wishes, where
    $\cha \in \encodingalphabet{\rank} \cup \set{\bzero{\rank}}$.
    Next, \theeplayer can accept this rewriting by continuing with the game, or challenge it by moving to $\countercheck{\rank}$.
    This ensures that \theaplayer has put enough $\bzero{\rank}$ characters on the stack (with correct indexing) to restore the number to its full length.
\end{itemize}

In this encoding one can only increment the topmost counter on the stack.
That is, to increment a counter, all counters above it must be erased.
Fortunately, $\counterinc{\rank}$ resets to zero all counters for ranks $\rank' > \rank$, meaning the counter updates follow a stack-like discipline.
This enables the encoding to work.
To store a character with counter values from the counter reduction
$\tup{\cha, \countervec}$ with $\countervec=\counter_1, \ldots, \counter_\maxrank$
we store the character $\cha$ on top and beneath we encode $\counter_\maxrank$, then $\counter_{\maxrank-2}$ and so on down to $\counter_1$.
\subsection{The Simulation}

The following definition is completed in the following sections.
Correctness is stated in \reflemma{lem:poly-equiv}.
The proof is formally detailed in Appendix~\ref{sec:poly-equiv-proof}.

\begin{definition}[$\polyred{\bound}$]
    Given a parity game
    $\game = \tup{\cpds, \owner, \wincond}$
    over the order-$\cpdsord$ CPDS
    $\cpds = \tup{\controls,
                  \alphabet, \chorder,
                  \cpdsrules,
                  \controlinit, \chainit,
                  \rankf}$
    and a bound $\bound$ $\cpdsord$-fold exponential in the size of the game, we define the safety game
    $\ap{\polyred{\bound}}{\game} =  \tup{\prcpds, \prowner, 2^{\omega}}$
    where
    $\prcpds =
    \tup{\prcontrols,
         \pralphabet, \prchorder,
         \prcpdsrules,
         \prcontrolinit, \prchainit,
         \prrankf}$.
    The missing components are defined below.
\end{definition}

We aim to simulate
$\ap{\counterred{\bound}}{\game}$ compactly.
This simulation is move-by-move, as follows.

A
$\cpush{\tup{\cha, \countervec}}$
of a character with counter values
$\tup{\cha, \countervec}$
with
$\countervec=\counter_1,\ldots, \counter_{\maxrank}$
(where the max-rank is $\maxrank$)
is simulated by first pushing a special character $\linkchar{\opord}$ to save the link (with $\cpush{\linkchar{\opord}}$).
Then, since the counter values are a copy of the preceding counter values on the stack, \theaplayer pushes an encoding for $\counter_1$ using $\counteralphabet{1}$ after which \theeplayer can accept the encoding, check that it is a proper encoding using $\countercheck{1}$, or check that it is a faithful copy of the preceding value of $\counter_1$ using $\equalscheck{1}$.
We do this for all odd ranks through to~$\maxrank$.
Then the only move is to push $\cha$ with
$\cpush{\cha}$.

Each $\push{\opord}$ and $\pop{\opord}$, with $\opord\in [2, n]$,
is simulated directly by the same operation.
For a $\pop{1}$ we (deterministically) remove all topmost characters (using $\pop{1}$) up to and including the first $\linkchar{\opord'}$ (for some $\opord'$).
We simulate $\collapse$ like $\pop{1}$, but we apply $\collapse$ to $\linkchar{\opord'}$.

A
$\rew{\tup{\cha, \countervec}}$
that does not change the counters can be simulated by rewriting the topmost character.
If $\counterinc{\rank}$ is applied, we force \theaplayer to play as follows.
If $\rank$ is even, \theaplayer removes the counters for $\rank' > \rank$.
She replaces them with zero values by pushing characters from
$\encodingalphabet{\rank'} \cup \set{\bzero{\rank'}}$.
After each counter is rewritten, \theeplayer can accept the encoding, or challenge it with $\countercheck{\rank'}$.
Finally, $\cha$ is pushed onto the stack.
If $\rank$ is odd, the counters for $\rank'$ are removed as before.
Then we do an increment as described above, with \theeplayer losing if the increment fails.
Note, it fails only if there is no $\bzero{\rank}$ in the encoding, which means the counter is at its maximum value and there is an overflow (indicating \theeplayer loses the parity game).
If it succeeds,
zero values for the counters $\rank' > \rank$ and $\cha$ are pushed to the stack as before.

\paragraph*{Control States and Alphabet}
We define the control states $\prcontrols$ with  $\prowner$ and $\prrankf$ as well as the alphabet.
First,
\begin{align*}
\prcontrols\ =\ \controls
        \cup
        \brac{\controls \times [0,\maxrank]}
        \cup
        \set{\eover, \aover}
        \cup
        \controlscw
        \ \cup \controlsop\ .
\end{align*}
where $\maxrank$ is the maximum rank.
The set $\controlscw$ is the control states of the Cachat-Walukiewicz games implementing $\countercheck{\rank}$ and $\equalscheck{\rank}$.
The size is polynomial in $\game$.
We have $\controlsop =$
\begin{align*}
    \begin{array}{c}
        \setcomp{
            \begin{array}{c}
                \controlincnoa{\rank}{\control},
                \controlcopy{\rank}{\cha}{\control},
                \controlzero{\rank}{\cha}{\control},
                \controlpop{\genplayer}{\rank}{\control}, \\
                \controlinc{\rank}{\cha}{\control},
                \controlcopytest{\rank}{\cha}{\control},
                \controlzerotest{\rank}{\cha}{\control},
                \controlcollapse{\genplayer}{\rank}{\control}
            \end{array}
        }{
            \begin{array}{c}
                \rank \in [0, \maxrank] \land \cha \in \alphabet
                \ \land \\
                \control \in \controls\ \land \genplayer \in \set{\aplayer,\eplayer}
            \end{array}
        }
    \end{array}
\end{align*}
to control the simulation of the operations as sketched above.
We describe the states below.

The states in $\controlscw$ have the same rank and owner as in the Cachat-Walukiewicz games
(more precisely the dual, see above).
All other states have rank $2$ except $\eover$ which has rank $1$.
It (resp.\ $\aover$) is the losing sink for \theeplayer (resp.\ \theaplayer).
The states in
$\controls \cup \brac{\controls \times [0, \maxrank]}\cup\set{\eover}$
are used as in
$\ap{\counterred{\bound}}{\game}$ to directly simulate $\game$.
The owners are as in $\game$.

A state
$\controlincnoa{\rank}{\control}$
begins an application of $\counterinc{\rank}$.
The top-of-stack character is saved by moving to
$\controlinc{\rank}{\cha}{\control}$.
The owner of these states does not matter, we give them to \theaplayer.
In $\controlinc{\rank}{\cha}{\control}$, the stack is popped down to the counter for $\rank$.
If $\rank$ is odd, the least significant zero is set to one.
Then, control moves to $\controlzero{\rank}{\cha}{\control}$.
In $\controlzero{\rank}{\cha}{\control}$, zero counters for ranks $\rank$ and above are pushed to the stack, followed by a push of $\cha$ and a return to control state $\control$.
The state is owned by \theaplayer.
The state
$\controlzerotest{\rank}{\cha}{\control}$
is used by \theeplayer to accept or challenge that the encoding has been re-established completely.
It is owned by \theeplayer.

The controls
$\controlcopy{\rank}{\cha}{\control}$
copy the counters for ranks $\rank$ and above (the current values) and push the copies to the stack, followed by a push of $\cha$ and a return to control state $\control$.
The state is owned by \theaplayer.
After this phase, the play moves to
$\controlcopytest{\rank}{\cha}{\control}$
where \theeplayer can accept or test whether the copy has been done correctly.
This state is owned by \theeplayer.

The controls
$\controlpop{\genplayer}{\rank}{\control}$
and
$\controlcollapse{\genplayer}{\rank}{\control}$
where
$\genplayer \in \set{\aplayer,\eplayer}$
are used to execute a $\pop{1}$ or $\collapse$.
For the latter,  we pop to the next $\linkchar{\opord}$ character, perform the collapse and record that the $\rank$th counter needs to be incremented.
In case the collapse is not possible (because it would empty the stack) play may also move to a sink state that is losing for the player $\genplayer$ who instigated the collapse.
The case of $\pop{1}$ pops $\linkchar{\opord}$.
The owner in each case is $\genplayer$ as they will avoid moving to their (losing) sink state if the $\pop{1}$ or $\collapse$ is possible.

The alphabet and initial control state and stack character are
\[
    \alphabet' = \alphabet \cup
                    \counteralphabet{}
                    \cup
                 \linkalphabet
    \quad
    \text{and}
    \quad
    \prcontrolinit = \controlzero{1}{\chainit}{\controlinit}
    \quad
    \text{and}
    \quad
    \prchainit = \linkchar{\opord}\text{ where }\ap{\chorder}{\chainit} = \opord\ .
\]
The alphabet is extended by the characters required for the counter and link encodings.
Recall that $\counteralphabet{}$ is the union of the counter alphabets, which are of polynomial size.
We use $\linkalphabet=\set{\linkchar{1}, \ldots, \linkchar{\maxrank}}$ for the link characters.
We assign
$\ap{\prchorder}{\linkchar{\opord}} = \opord$
and
$\ap{\prchorder}{\cha} = 1$
for all other $\cha$.

The task of the initial state and initial stack character is to establish the encoding of $(\chainit, (0,\ldots, 0))$ in $\ap{\counterred{\bound}}{\game}$ and then move to the initial state of $\game$.
With the above description, $\controlzero{1}{\chainit}{\controlinit}$ will establish zeros in all counters (from $1$ to $\maxrank$), push the initial character $\chainit$ of the given game, and move to state $\controlinit$.
The initial character $\linkchar{\opord}$ is the correct bottom element for the encoding of $(\chainit, (0,\ldots, 0))$.
\paragraph*{Rules}

The rules of $\prcpds$ follow $\cpds$ and maintain the counters.
$\prcpdsrules$ contains (only) the following rules.
First, we have $\rulescw$ which are the (dual of the) rules of Cachat and Walukiewicz implementing $\countercheck{\rank}$ and $\equalscheck{\rank}$.
The rules simulating the operations appear below.
Note, pop and collapse use rank-awareness.
We give the increment and copy rules after the basic operations.
\begin{itemize}
\item
    Order-$\opord$ push:
    $\tup{\control,
          \cha,
          \push{\opord},
          \controlincnoa{\ap{\rankf}{\control'}}{\control'}}$
    when
        $\tup{\control, \cha, \push{\opord}, \control'} \in \cpdsrules$.
\item
    Character push:
    $\tup{\control,
          \cha,
          \cpush{\linkchar{\opord}},
          \controlcopy{1}{\chb}{\control'}}$
    when
        $\tup{\control, \cha, \cpush{\chb}, \control'} \in \cpdsrules$ and
        $\ap{\chorder}{\chb} = \opord$.

\item
    Rewrite:
    $\tup{\control, \cha, \rew{\chb}, \controlincnoa{\ap{\rankf}{\control'}}{\control'}}$
    when
        $\tup{\control, \cha, \rew{\chb}, \control'} \in \cpdsrules$.

\item
    Pop ($> 1$):
    $\tup{\control, \cha, \pop{\opord}, \controlincnoa{\rank}{\control'}}$
    when
        $\tup{\control, \cha, \pop{\opord}, \control'} \in \cpdsrules$ and
        $\rank\ =\ \ap{\min}{\ap{\rankf}{\control'}, \ap{\colrankfun}{\opord}}$.

\item
    Pop ($= 1$) and Collapse:
    $\tup{\control, \cha, \pop{1}, \tup{\genop, \genplayer, \rank, \control'}}$
    when
        $\tup{\control, \cha, \genop, \control'} \in \cpdsrules$
        with
        $\ap{\owner}{\control} = \genplayer$,
        operation $\genop$ being $\pop{1}$ or $\collapse$, and
        $\rank =\ \ap{\min}{
            \ap{\rankf}{\control'},
            \rank'
        }$.
        Here, if $\genop = \pop{1}$ then
        $\rank' = \ap{\colrankfun}{1}$.
        Otherwise, if $\genop = \collapse$ then
        $\rank' = \ap{\colrankfun}{\link}$.

    Then, we have all rules
    $\tup{\tup{\genop, \genplayer, \rank, \control'},
          \cha',
          \pop{1},
          \tup{\genop, \genplayer, \rank, \control'}}$
    for
    $\cha' \in \counteralphabet{}$.
    We perform the operation with
    $\tup{\tup{\genop, \genplayer, \rank, \control'},
          \linkchar{\opord'},
          \genop,
          \controlincnoa{\rank}{\control'}}$.
    To allow for the case where the pop or collapse cannot be performed (because the stack would empty), we also have the rules
    $\tup{\tup{\genop, \aplayer, \rank, \control'},
          \linkchar{\opord'},
          \noop,
          \aover}$
    and
    $\tup{\tup{\genop, \eplayer, \rank, \control'},
         \linkchar{\opord'},
         \noop,
         \eover}$.

  \item
    Sink states:
    $\tup{\aover, \cha, \noop, \aover}$
    and
    $\tup{\eover, \cha, \noop, \eover}$.
\end{itemize}

To copy counters, for each odd $\rank$ and
$\chb \in \counteralphabet{\rank}$
we have
$\tup{
    \controlcopy{\rank}{\cha}{\control},
    \dontcare,
    \cpush{\chb},
    \controlcopy{\rank}{\cha}{\control}
}$.
We use $\dontcare$ to indicate that the transition exists for all stack symbols.
When a counter has been pushed, like in the case of pushing zeros, \theaplayer hands over the control to \theeplayer to check the result:
$\tup{
    \controlcopy{\rank}{\cha}{\control},
    \dontcare,
    \noop,
    \controlcopytest{\rank}{\cha}{\control}
}$.
\theeplayer can challenge the copied counter or accept it was copied correctly.
To challenge, we use
$\tup{
    \controlcopytest{\rank}{\cha}{\control},
    \dontcare,
    \noop,
    \equalscheck{\rank}
}$.
To accept, the behavior depends on $\rank$.
If $\rank < \maxrank$, we move to copying the next counter
$
\tup{
    \controlcopytest{\rank}{\cha}{\control},
    \dontcare,
    \noop,
    \controlcopy{\rank+2}{\cha}{\control}
}$.
When $\rank = \maxrank$, we finish copying and move to incrementing with rules of the form
$\tup{
    \controlcopytest{\rank}{\cha}{\control},
    \dontcare,
    \noop,
    \controlinc{\ap{\rankf}{\control}}{\cha}{\control}
}$.

To increment a counter, we first pop and store the topmost stack character with the rule
$\tup{\controlincnoa{\rank}{\control},
          \cha,
          \pop{1},
          \controlinc{\rank}{\cha}{\control}}$.
\theaplayer then removes all counters for ranks higher than the given $\rank$ with the following rules, where $\chb \in\counteralphabet{\rank'}$ with $\rank' > \rank$:
$\tup{\controlinc{\rank}{\cha}{\control},
             \chb,
             \pop{1},
             \controlinc{\rank}{\cha}{\control}}$.

When $\rank$ is even we add back $0$ counters once enough have been removed using
(with
 $\chb \in \counteralphabet{\rank-1}$
 if $\rank > 1$ else
 $\chb \in \linkalphabet$)
the rules
$\tup{\controlinc{\rank}{\cha}{\control},
             \chb,
             \noop,
             \controlzero{\rank+1}{\cha}{\control}}$.
If $\rank$ is odd, we start incrementing the $\rank$th counter with
$\tup{\controlinc{\rank}{\cha}{\control},
      \chb,
      \pop{1},
      \controlinc{\rank}{\cha}{\control}}$
for all
$\chb \in \encodingalphabet{\rank} \cup \set{\bone{\rank}}$.

When $\bzero{\rank}$ is found, we use
$\tup{\controlinc{\rank}{\cha}{\control},
             \bzero{\rank},
             \rew{\bone{\rank}},
             \controlzero{\rank}{\cha}{\control}}$.
If no zero bit is found, we have an overflow and move
to the sink state with $\tup{\controlinc{\rank}{\cha}{\control},
          \chb,
          \noop,
          \eover}$
    for
    $\chb \in \counteralphabet{\rank-2}$
    if $\rank > 2$ and
    $\chb \in \linkalphabet$
    otherwise.
With
$\tup{
        \controlzero{\rank}{\cha}{\control},
        \dontcare,
        \cpush{\chb},
        \controlzero{\rank}{\cha}{\control}
    }$
for
$\chb\in \encodingalphabet{\rank} \cup \set{\bzero{\rank}}$
we add back zeros to the incremented counter and reset all erased counters.
To finish the phase that adds zeros for the $\rank$th counter, \theaplayer hands over the control to \theeplayer,
$\tup{
        \controlzero{\rank}{\cha}{\control},
        \dontcare,
        \noop,
        \controlzerotest{\rank}{\cha}{\control}
    }$.

\theeplayer can now check if all bits of the counter are present or accept the result.
To challenge the encoding, he uses
$\tup{
    \controlzerotest{\rank}{\cha}{\control},
    \dontcare,
    \noop,
    \countercheck{\rank}
}$.
When accepting it, if $\rank < \maxrank$, more counters need to be reset.
We move to the next using
$\tup{
        \controlzerotest{\rank}{\cha}{\control},
        \dontcare,
        \noop,
        \controlzero{\rank+2}{\cha}{\control}
    }$.
If $\rank = \maxrank$, there are no more counters to handle and with the rule
$\tup{
    \controlzerotest{\rank}{\cha}{\control},
    \dontcare,
    \cpush{\cha},
    \control
}$
\theeplayer re-establishes the control state and stack character.

\section{Conclusion}
\label{sec:conclusion}
We gave a polynomial-time reduction from parity games played over order-$\cpdsord$ CPDS to safety games over order-$\cpdsord$ CPDS. 
Such a reduction has been an open problem~\cite{FZ12} (related are also \cite{BJW02,BD08}). 
It builds counters into the stack to count occurrences of odd ranks at the current stack level (without seeing a smaller rank). 
If this number grows large then \theeplayer would lose the parity game (if play continued). 
To obtain a polynomial reduction we use the insight that the counters follow a stack discipline.
For correctness, we use a commutativity argument for the rank counter and order reductions. 
As a theoretical interest, the result explains the matching complexities of parity and safety games over CPDS.
From a practical standpoint, the reduction may inspire the use of advanced safety checking tools for, and the transfer of technology from safety to, the empirically harder problem of parity game analysis.


\bibliography{57hmmz}

\newpage
\appendix

\section{Order-$1$ Rank-Aware}
\label{sec:order-1-rank-aware}

\reflemma{lem:rank-aware} was shown by Hague\etal~\cite{HMOS08} in a slightly restricted form.
In particular, the provided rank function $\colrankfun$ does not provide information on the order-$1$ component.
That is
\[
    \colrankfun : [2,\cpdsord] \cup \set{\link} \rightarrow
                  [0,\maxrank]
\]
where $\maxrank$ is the maximum rank.
Let us call this \emph{higher-rank-aware}.
Let $\higherranks$ be the set of all such functions.
We show that this can easily be extended to rank-awareness as defined in \refdefinition{def:rank-aware} where
\[
    \colrankfun : [1,\cpdsord] \cup \set{\link} \rightarrow
                  \set{0,\ldots,\maxrank}
\]
as required.
Let $\allranks$ be the set of all such functions.
Moreover, given
$\colrankfun \in \higherranks$,
let
$\extrkf{\colrankfun}{\rank}$
be
\[
    \ap{\extrkf{\colrankfun}{\rank}}{\opord} =
    \begin{cases}
        \rank & \opord = 1 \\
        \ap{\colrankfun}{\rank} & \text{otherwise.}
    \end{cases}
\]

Assume we are given a parity game over a CPDS that is higher-rank-aware.
We can transform this into an equivalent rank-aware CPDS by simply storing in each character the minimum rank seen since the character was pushed.
The definition is given below.

\begin{definition}
    Let
    $\game = \tup{\cpds, \controlinit, \chainit, \owner, \rankf, \wincond}$
    be a parity game over a CPDS
    \[
        \cpds = \tup{\controls, \alphabet \times \higherranks, \cpdsrules}
    \]
    such that $\cpds$ is higher-rank-aware.
    We define
    \[
        \game' =
            \tup{\cpds', \controlinit', \chainit', \owner', \rankf', \wincond}
    \]
    where
    \[
        \cpds' = \tup{\controls', \alphabet', \cpdsrules'}
    \]
    and
    $\controls' = \controls \cup \controls \times \set{0,\ldots,\maxrank}$,
    $\alphabet' = \alphabet \times \allranks$,
    $\cpdsrules'$ is the smallest set containing
    \begin{itemize}
    \item
        $\tup{\control,
              \tup{\cha, \extrkf{\colrankfun}{\rank}},
              \cpush{\tup{\chb,
                          \extrkf{\colrankfun'}
                                 {\ap{\rankf}{\control'}}}},
              \control'}$
        whenever we have the rule
        $\tup{\control,
              \tup{\cha,\colrankfun},
              \cpush{\tup{\chb,\colrankfun'}},
              \control'}
         \in
         \cpdsrules$
        and
        $0 \leq \rank \leq \maxrank$,

    \item
        $\tup{\control,
              \tup{\cha, \extrkf{\colrankfun}{\rank}},
              \rew{\tup{\chb,
                        \extrkf{\colrankfun'}{\rank'}}},
              \control'}$
        whenever we have a rule of the form
        $\tup{\control,
              \tup{\cha, \colrankfun},
              \rew{\tup{\chb, \colrankfun'}},
              \control'} \in \cpdsrules$
        and
        $0 \leq \rank \leq \maxrank$,
        where
        $\rank' = \ap{\min}{\rank, \ap{\rankf}{\control'}}$,

    \item
        $\tup{\control,
              \tup{\cha, \extrkf{\colrankfun}{\rank}},
              \genop,
              \tup{\control', \rank'}}$
        for each
        $\tup{\control,
              \tup{\cha, \colrankfun},
              \genop,
              \control'} \in \cpdsrules$,
        $\rank' = \ap{\min}{\rank, \ap{\rankf}{\control'}}$
        and
        $0 \leq \rank \leq \maxrank$
        where $\genop$ is
        $\push{\opord}$, $\pop{\opord}$, or $\collapse$
        for some $\opord$, and

    \item
        $\tup{\tup{\control, \rank},
              \tup{\cha, \extrkf{\colrankfun}{\rank'}},
              \rew{\tup{\cha, \extrkf{\colrankfun}{\rank''}}},
              \control}$
        for each
        $\control \in \controls$,
        $\cha \in \alphabet$,
        $\colrankfun \in \higherranks$,
        $0 \leq \rank, \rank' \leq \maxrank$, and
        $\rank' = \ap{\min}{\rank, \rank'}$,
    \end{itemize}
    the ownership function is defined
    \[
        \ap{\owner'}{\control} =
        \begin{cases}
            \ap{\owner}{\control'} & \control = \tup{\control', \rank} \\
            \ap{\owner}{\control} & \text{otherwise.}
        \end{cases}
    \]
    We define the initial control state and stack character
    \[
        \begin{array}{rcl}
            \controlinit' &=& \controlinit \\
            \chainit' &=& \tup{\cha,
                               \extrkf{\colrankfun}
                                      {\ap{\rankf}{\controlinit}}}
        \end{array}
    \]
    where
    $\chainit = \tup{\cha, \colrankfun}$.
\end{definition}

One can observe the equivalence between $\game$ and $\game'$ from the initial configuration
$\tup{\controlinit, \sbrac{\cdots\sbrac{\chainit}{1}\cdots}{\cpdsord}}$
of $\game$ and the initial configuration
$\tup{\controlinit',
      \sbrac{\cdots\sbrac{
          \tup{\chainit, \ap{\rankf}{\controlinit}}
      }{1}\cdots}{\cpdsord}}$
of $\game'$ by noting
\begin{itemize}
\item
    there is a move
    $\tup{\control,
          \tup{\cha,\colrankfun},
          \cpush{\tup{\chb,\colrankfun'}},
          \control'}
     \in
     \cpdsrules$
    in $\game$ iff there is a corresponding move
    $\tup{\control,
          \tup{\cha, \extrkf{\colrankfun}{\rank}},
          \cpush{\tup{\chb,
                      \extrkf{\colrankfun'}
                             {\ap{\rankf}{\control'}}}},
          \control'}$
    in $\game'$,

\item
    there is a move
    $\tup{\control,
          \tup{\cha, \colrankfun},
          \rew{\tup{\chb, \colrankfun'}},
          \control'} \in \cpdsrules$
    in $\game$ iff there is a corresponding move
    $\tup{\control,
          \tup{\cha, \extrkf{\colrankfun}{\rank}},
          \rew{\tup{\chb,
                    \extrkf{\colrankfun'}{\rank'}}},
          \control'}$
    in $\game'$ where
    $\rank' = \ap{\min}{\rank, \ap{\rankf}{\control'}}$,

\item
    a move
    $\tup{\control, \tup{\cha, \colrankfun}, \genop, \control'}$
    can be played in $\game$ iff a pair of moves
    $\tup{\control,
          \tup{\cha, \extrkf{\colrankfun}{\rank_1}},
          \genop,
          \tup{\control', \rank_2}}$
    and
    \[
        \tup{\tup{\control', \rank_2},
             \tup{\chb, \extrkf{\colrankfun'}{\rank_3}},
             \rew{\tup{\chb, \extrkf{\colrankfun'}{\rank_4}}},
             \control'}
    \]
    for all other $\genop$ where
    $\rank_4 = \ap{\min}{\rank_2, \rank_3}$
    can be played in $\game'$.
    Note, the second move may read $\chb$ rather than $\cha$ since $\genop$ may have been a pop or collapse operation.
\end{itemize}
Thus, a play of $\game$ can be directly mapped to a play in $\game'$ and vice-versa.
In both directions, the sequence of ranks generated is the same, except each rank may appear twice (contiguously).
Thus, the winner remains the same.

Moreover, one can observe that $\cpds'$ is collapse-rank aware\footnote{%
    Strictly speaking, we have a one-move delay after some operations.
    This, however, is benign and a feature of the original rank-awareness construction.
}.
For all orders except $1$ this follows from $\colrankfun$.
For order-$1$, observe that when a character $\cha$ is pushed it is tagged with the rank of the current control state
($\tup{\cha, \ap{\rankf}{\control'}}$).
The invariant maintained is that when
$\tup{\cha, \rank}$
is at the top of the stack, then $\rank$ is the smallest rank seen since $\cha$ was pushed.
One can observe that each move of $\game'$ maintains this invariant, either directly
(in the case of rewrite moves)
or via an intermediate control state
$\tup{\control, \rank'}$
which indicates the current top character needs to be updated to account for a (smallest) rank of $\rank'$ having been seen since the character was last updated.

\begin{proposition}
    Elvis wins $\game$ iff he wins $\game'$ and $\game'$ is collapse-rank aware.
\end{proposition}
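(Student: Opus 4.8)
The claim has two parts --- that \theeplayer wins $\game$ iff he wins $\game'$, and that $\cpds'$ is collapse-rank aware, i.e.\ rank-aware in the sense of \refdefinition{def:rank-aware} --- and I would prove them in turn, using the move-by-move correspondence sketched above. For the equivalence I would make that correspondence precise: each $\cpush$- or $\rew$-rule of $\game$ is matched by a single rule of $\game'$ (differing only in the order-$1$ rank component of the affected character), whereas a rule using $\push{\opord}$, $\pop{\opord}$, or $\collapse$ is matched by a \emph{pair} of consecutive $\game'$-rules, the second of which is a forced rewrite through the intermediate control state $\tup{\control',\rank'}$ that merely corrects the stored order-$1$ rank. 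Because $\tup{\control',\rank'}$ has owner $\ap{\owner}{\control'}$ and its unique outgoing move is this rewrite, the intermediate step carries no genuine choice. Hence a move is enabled in $\game$ iff its image is enabled in $\game'$; corresponding positions have the same owner and, modulo the forced moves, the same set of choices; and a strategy for either player in $\game$ lifts verbatim to $\game'$ by inserting the forced moves, while conversely a $\game'$-strategy projects back to $\game$. Along corresponding plays the rank sequences agree except that the rank of the target state of each $\push{\opord}$/$\pop{\opord}$/$\collapse$ move is emitted twice in a row (once by the intermediate state, once by $\control'$), and such contiguous duplications leave the set of infinitely-occurring ranks unchanged, so the parity condition holds along the $\game$-play iff along the $\game'$-play. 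Together these observations give that winning strategies transfer in both directions, hence the equivalence.

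For collapse-rank awareness I would check \refdefinition{def:rank-aware} componentwise along an arbitrary computation of $\cpds'$. The components $\colrankfun(\opord)$ for $\opord\in[2,\cpdsord]$ and $\colrankfun(\link)$ are handled exactly as in the given higher-rank-aware $\cpds$: the rules of $\cpds'$ manipulate only the order-$1$ component, so the level-$\opord$ and link-level guarantees of $\cpds$ survive unchanged (up to the usual one-move delay). For the order-$1$ component I would prove, by induction on the computation length, the invariant that whenever a configuration has top character $\tup{\cha,\colrankfun}$, the value $\colrankfun(1)$ equals the level-$1$ rank there, i.e.\ the least rank seen since the current topmost order-$0$ stack was created. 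A $\cpush$ re-tags the new top character with $\ap{\rankf}{\control'}$, which is exactly the level-$1$ rank just after the push; a $\rew$ replaces the stored value by its minimum with $\ap{\rankf}{\control'}$, and since a rewrite does not alter the $1$-ancestor this is again the level-$1$ rank; and a $\push{\opord}$, $\pop{\opord}$, or $\collapse$ is absorbed by the two-step encoding, whose second step rewrites the now-topmost (exposed or copied) character to the minimum of its stored value and the rank $\rank'$ carried by the intermediate state, restoring the invariant after one move. The sequence of generated ranks is unaffected except for the benign contiguous doubling already used in the equivalence argument.

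The delicate point, as ever, is the $\pop{\opord}$ and $\collapse$ cases of the order-$1$ invariant. There the character re-exposed by the operation was buried earlier, and its stored value is the level-$1$ rank frozen at that earlier moment; to recover the current level-$1$ rank one must argue that taking the minimum with $\rank'$ --- which, by the induction hypothesis, already records the least rank seen since the character directly above it was created --- telescopes precisely over the interval from that buried character's $1$-ancestor up to the present configuration, with the rank of the new control state folded in as well. Reconciling this telescoping with the $\opord$-ancestor bookkeeping that governs the higher-order components is routine but is where the care lies; it is exactly the one-move-delay phenomenon already present in the original rank-awareness construction of~\cite{HMOS08}, and I would lean on that construction rather than redo it from scratch.
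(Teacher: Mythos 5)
Your proposal matches the paper's own argument essentially step for step: the same move-by-move correspondence in which $\cpush$/$\rew$ rules map to single rules while $\push{\opord}$/$\pop{\opord}$/$\collapse$ rules map to a pair with a forced intermediate rewrite state, the same observation that contiguous duplication of ranks preserves the winner, and the same split of rank-awareness into the higher-order and link components inherited from the given $\colrankfun$ plus an order-$1$ invariant maintained move by move. The telescoping subtlety you flag for $\pop{\opord}$ and $\collapse$ (the re-exposed character's stored value being frozen at burial time) is genuine, but the paper treats it no more rigorously than you do --- it simply asserts that each move preserves the invariant and defers the one-move-delay issue to the original construction of~\cite{HMOS08}.
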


\section{Precise Calculation of Bound} \label{sec:bound-calc}

We show how to obtain the required bound precisely.
Let $\maxrank$ be the maximum rank and $\numranks = \maxrank + 1$ and assume $\numranks \geq 2$.
At order-$1$, from Fridman and Zimmerman the bound must be at least
$\sizeof{\controls}
 \cdot
 \sizeof{\alphabet}
 \cdot
 2^{\numranks \cdot \sizeof{\controls}}
 \cdot
 \numranks$.
Recall $\sizeof{\cpds} = \sizeof{\controls} + \sizeof{\Sigma}$.
We will use $\sizeof{\cpds}$ below.
For the generalization of this bound, we need to consider the cost of the rank-awareness transformation.
This is bound by some polynomial $f(x)$, which can be bound by
$\rankawareconst \cdot x^\rankawarepower$
for some constants $\rankawareconst$ and $\rankawarepower$.
For technical reasons, assume
$\rankawarepower \geq 9$.
In practice, one may produce a more fine-grained analysis to keep the bounds as low as possible.

Given a parity game
$\game = \tup{\cpds, \owner, \wincond}$
over an order-$\cpdsord$ CPDS
$\cpds = \tup{\controls,
              \alphabet, \chorder,
              \cpdsrules,
              \controlinit, \chainit,
              \rankf}$
we set a bound of
\[
    \ap{\nexp{\cpdsord}}{
        \rankawareconst \cdot
        16^\rankawarepower \cdot
        \numranks^{2 \cdot \rankawarepower} \cdot
        \sizeof{\cpds}^{2 \cdot \rankawarepower}
    }
\]
which exceeds the required order-$1$ bound.
The number of control states in
$\gameo = \ap{\orderred}{\game}$
will be seen to be bounded by
$4 \cdot
 \sizeof{\cpds}^2 \cdot
 \numranks^2 \cdot
 2^{2 \cdot \sizeof{\cpds} \cdot \numranks}$.
The size of the alphabet will be seen to be bound by
$\sizeof{\cpds} + \sizeof{\cpds} \cdot 2^{\sizeof{\cpds} \cdot \numranks}$.
Thus, the size of the CPDS will be bound by
$6 \cdot
 \sizeof{\cpds}^2 \cdot
 \numranks^2 \cdot
 2^{2 \cdot \sizeof{\cpds} \cdot \numranks}$.
This becomes
$\rankawareconst \cdot
 6^\rankawarepower \cdot
 \sizeof{\cpds}^{2 \cdot \rankawarepower} \cdot
 \numranks^{2 \cdot \rankawarepower} \cdot
 2^{2 \cdot \sizeof{\cpds} \cdot \numranks \cdot \rankawarepower}$
after re-establishing rank-awareness.

Then, by induction, we know that \theeplayer will win the parity game $\gameo$ iff he wins $\gameoc$ for any
\[
    \bound
    \geq
    \ap{\nexp{\cpdsord-1}}{
        \begin{array}{c}
            \rankawareconst \cdot
            16^\rankawarepower \cdot
            \numranks^{2 \cdot \rankawarepower}\ \cdot
            \rankawareconst^{2 \cdot \rankawarepower} \cdot
            6^{2 \cdot \rankawarepower^2} \cdot
            \sizeof{\cpds}^{4 \cdot \rankawarepower^2} \cdot
            \numranks^{4 \cdot \rankawarepower^2} \cdot
            2^{4 \cdot \sizeof{\cpds} \cdot \numranks \cdot \rankawarepower^2}
        \end{array}
    } \ .
\]
We argue that each of the $8$ terms in the product is bound by
$2^{\rankawareconst \cdot
    2^{\rankawarepower} \cdot
    \numranks^{2 \cdot \rankawarepower} \cdot
    \sizeof{\cpds}^{2 \cdot \rankawarepower}}$
as follows:
\begin{enumerate}
\item
    $\rankawareconst \leq 2^\rankawareconst$,
\item
    $16^\rankawarepower \leq 2^{2^\rankawarepower}$
    when
    $\rankawarepower \geq 4$,
\item
    $\numranks^{2 \cdot \rankawarepower} \leq 2^{\numranks^{2 \cdot \rankawarepower}}$,
\item
    $\rankawareconst^{2 \cdot \rankawarepower}
     \leq
     2^{\rankawareconst \cdot 2^\rankawarepower}$
    since
    $\rankawareconst^{2 \cdot \rankawarepower}
     \leq
     2^{\rankawareconst \cdot 2 \cdot \rankawarepower}$
    and
    $\rankawareconst \cdot 2 \cdot \rankawarepower
     \leq
     \rankawareconst \cdot 2^\rankawarepower$,
\item
    $6^{2 \cdot \rankawarepower^2} \leq 2^{2^\rankawarepower}$
    when
    $\rankawarepower \geq 9$,
\item
    $\sizeof{\cpds}^{4 \cdot \rankawarepower^2}
     \leq
     2^{\sizeof{\cpds}^{2 \cdot \rankawarepower}}$
    because
    $\sizeof{\cpds}^{4 \cdot \rankawarepower^2}
     \leq
     2^{\sizeof{\cpds} \cdot 4 \cdot \rankawarepower^2}$
    and, comparing the exponents, we show
    $\sizeof{\cpds} \cdot 4 \cdot \rankawarepower^2
     \leq
     \sizeof{\cpds}^{2 \cdot \rankawarepower}$
    by dividing by $\sizeof{\cpds}$ to get
    $4 \cdot \rankawarepower^2
     \leq
     \sizeof{\cpds}^{2 \cdot \rankawarepower - 1}$
    which holds since $\sizeof{\cpds}$ is at least $2$
    (one control state, one character)
    and
    $4 \cdot \rankawarepower^2
     \leq
     2^{2 \cdot \rankawarepower - 1}$
    holds for all $\rankawarepower \geq 4$,
\item
    $\numranks^{4 \cdot \rankawarepower^2}
     \leq
     2^{\numranks^{2 \cdot \rankawarepower}}$
    since
    $\numranks^{4 \cdot \rankawarepower^2}
     \leq
     2^{\numranks \cdot 4 \cdot \rankawarepower^2}$
    and comparing the exponents we show
    $\numranks \cdot 4 \cdot \rankawarepower^2
     \leq
     \numranks^{2 \cdot \rankawarepower}$
    by dividing by $\numranks$ and observing we have
    $4 \cdot \rankawarepower^2 \leq \numranks^{2 \cdot \rankawarepower - 1}$
    if both $\numranks \geq 2$ and $\rankawarepower \geq 4$.
\item
    $2^{4 \cdot \sizeof{\cpds} \cdot \numranks \cdot \rankawarepower^2}
     \leq
     2^{2^\rankawarepower \cdot
        \numranks^{2 \cdot \rankawarepower} \cdot
        \sizeof{\cpds}^{2 \cdot \rankawarepower}}$
    because immediately
    $\sizeof{\cpds} \leq \sizeof{\cpds}^{2 \cdot \rankawarepower}$
    and
    $\numranks \leq \numranks^{2 \cdot \rankawarepower}$
    and the remaining exponents have
    $4 \cdot \rankawarepower^2 \leq 2^\rankawarepower$
    for all
    $\rankawarepower \geq 8$.
\end{enumerate}

For $\gameoc$, the assigned bound
\[
    \bound =
        \ap{\nexp{\cpdsord}}{
            \rankawareconst \cdot
            16^\rankawarepower \cdot
            \numranks^{2 \cdot \rankawarepower} \cdot
            \sizeof{\cpds}^{2 \cdot \rankawarepower}
        }
\]
is large enough because it exceeds
\[
    \ap{\nexp{\cpdsord-1}}{
        \brac{
            2^{\rankawareconst \cdot
               2^\rankawarepower \cdot
               \numranks^{2 \cdot \rankawarepower} \cdot
               \sizeof{\cpds}^{2 \cdot \rankawarepower}}
        }^8
    } \ .
\]
Thus, with $\bound$, the diagram in Section~\ref{sec:outline} commutes.

\section{Details of Order Reduction}
\label{sec:appendix-orderreduction}

We complete the following definition in the sequel.

\begin{definition}{$\orderred$}
    Given an order-$\cpdsord$ CPDS
    $\cpds = \tup{\controls,
                  \alphabet, \chorder,
                  \cpdsrules,
                  \controlinit, \chainit,
                  \rankf}$
    and game
    $\game = \tup{\cpds, \owner, \wincond}$
    we define
    $\ap{\orderred}{\game} = \tup{\orcpds, \orowner, \wincond}$
    where
    $\orcpds = \tup{\orcontrols,
                    \oralphabet, \orchorder,
                    \orcpdsrules,
                    \orcontrolinit, \chainit,
                    \orrankf}$.
\end{definition}

Let $\maxrank$ be the maximum rank in the game.

\subsection{Control States and Alphabet}

Let $\popguess, \popguess'$ range over $\powerset{\controls}^{\maxrank+1}$.
The set of control states $\orcontrols$ contains all states of the following five forms.
We also define the owner ($\orowner$) and rank ($\orrankf$) of each control state.
When simulating commands different from $\push{\cpdsord}$ and $\cpush{\cha}$ with
$\ap{\chorder}{\cha} = \cpdsord$,
the control states are $\orcontrol=\tup{\control, \popguess}$.
Here, $\control$ is the current control state and $\popguess$ is the current claim about where the play will $\pop{\cpdsord}$ to.
    The owner is $\ap{\owner}{\control}$.
    The rank is $\ap{\rankf}{\control}$.
The remaining control states are as follows, where we sometimes access the component $\control$ of $\orcontrol$:
\begin{itemize}
\item
    $\tup{\orcontrol, \control'}$
    is used to begin the simulation of a $\push{\cpdsord}$ to control state $\control'$.
    It is a tuple $\orcontrol$ as before plus $\control'$, which is the target control state of the push.
    These control states are owned by \theeplayer.
    The rank is $\ap{\rankf}{\control}$.

\item
    $\tup{\orcontrol, \control', \popguess'}$
    is used to continue the simulation of $\push{\cpdsord}$.
    The extra component $\popguess'$ is the guess of \theeplayer about where the play will pop to.
    These control states are owned by \theaplayer.
    The rank is $\ap{\rankf}{\control}$.

\item
    $\tup{\orcontrol, \rank}$
    (or $\tup{\control, \popguess, \rank}$)
    is used after skipping directly to the pop location of a push.
    Here, $\rank$ is the smallest rank that would have been seen if the push to pop was performed in full.
    The rank is $\rank$.
    The owner does not matter.
    For completeness we give the states to \theeplayer.

\item
    $\ewin$, $\elose$ are sink states indicating that \theeplayer won or lost the game, respectively.
    The ownership of these states does not affect the game, we give them to \theeplayer.
\end{itemize}
The initial control state is
\begin{align*}
\orcontrolinit=
        \tup{\controlinit, \undefpopguess}
\end{align*}
where $\undefpopguess$ is a special value indicating that a $\pop{\cpdsord}$ operation cannot occur (as this would empty the stack).
The stack alphabet $\oralphabet$ is
$\alphabet \cup \brac{
    \alphabet \times \brac{
        \powerset{\controls}^{\maxrank+1} \cup
        \set{\undefpopguess}
    }
}$.
A letter
$\tup{\cha, \popguess}$
contains the obligation to reach $\popguess$ when $\collapse$ is applied to the letter
(or indicates collapse is not possible if $\popguess = \undefpopguess$).
Finally, we define
$\ap{\orchorder}{\cha'} = 1$
when
$\cha' = \tup{\cha, \popguess}$
and
$\ap{\orchorder}{\cha'} = \ap{\chorder}{\cha'}$
otherwise.

\subsection{Rules}
We first describe the rules directly simulating $\cpds$, then we give the rules required to simulate order-$\cpdsord$ rules.
Starting from a rule $\tup{\control, \cha, \genop, \control'} \in \cpdsrules$, we define the rules of $\orcpdsrules$ by a case distinction along the operations $\genop$.
To shorten the definition, we let $\orcontrol = \tup{\control, \popguess}$ and $\orcontrol' = \tup{\control', \popguess}$.
Here, $\control$ and $\control'$ stem from the simulated rule.
The component $\popguess$ is implicitly universally quantified, which means there are rules in $\orcpdsrules$ for all choices of $\popguess$.
We use this implicit universal quantification for all components that are not defined explicitly.
Unless otherwise stated, all definitions are assumed to require the constraint $\cha' = \cha$ or $\cha' = \tup{\cha, \popguess^{\natural}}$ (where $\popguess^{\natural}$ will be universally quantified over
$\powerset{\controls}^{\maxrank+1} \cup
 \set{\undefpopguess}$).
The direct simulation of rules is by two definitions.
\begin{itemize}
\item
    Operations: $\tup{\orcontrol,
          \cha', \genop,
          \orcontrol'}$
    when
        $\genop$ is not order-$\cpdsord$ and not $\rew{\chb}$
        and if $\genop = \collapse$ then $\cha'$ is not of the form
        $\tup{\cha, \popguess^{\natural}}$.
        \vspace{0.1cm}

\item
    Rewrite: $\tup{\orcontrol,
          \cha', \rew{\chb'},
          \orcontrol'}$
    when $\genop$ is  $\rew{\chb}$
        and either
            $\cha' = \cha$ and $\chb' = \chb$, or
            $\cha' = \tup{\cha, \popguess^{\natural}}$
            and
            {$\chb' = \tup{\chb, \popguess^{\natural}}$}\ .
\item
    Push: the following rules simulate $\genop=\push{\cpdsord}$ :
    \begin{itemize}
    \item
        Start: $\tup{\orcontrol,
              \cha', \noop,
              \tup{\orcontrol, \control'}}$. Note that $\control'$ is from the rule in $\cpdsrules$.\vspace{0.1cm}

    \item
        Claim:
        $\tup{\tup{\orcontrol, \control'},
              \cha', \noop,
              \tup{\orcontrol, \control', \popguess'}}$.
        Here, the obligation $\popguess'$ is universally quantified. \vspace{0.1cm}

    \item
        Skip 1:
        $\tup{\tup{\orcontrol, \control', \popguess'},
              \cha', \noop,
              \tup{\control'', \popguess, \rank}}$
        when
            $\popguess'$ is a vector $\tup{\controlset_0, \ldots, \controlset_\maxrank}$,
            $\control'' \in \controlset_{\rank}$.
            Recall that $\popguess$ is a component of $\orcontrol$.
            Note the quantification over $\rank$ and $\control''$.

    \item Skip 2:
        $\tup{\tup{\control'', \popguess, \rank},
              \cha', \noop,
              \tup{\control'', \popguess}}$.

    \item
        Push:
        $\tup{\tup{\orcontrol, \control', \popguess'},
              \cha', \noop,
              \tup{\control', \popguess'}}$.
    \end{itemize}

\item
    Character push: to simulate an order-$\cpdsord$ $\cpush{\chb}$, we use
    $\tup{\orcontrol,
          \cha', \cpush{\tup{\chb, \popguess}},
          \orcontrol'}$.
    Letter $\chb$ is annotated by the obligation $\popguess$ from $\orcontrol$.
    Indeed, a $\collapse$ eventually invoked on $\chb$ corresponds to an immediate $\pop{\cpdsord}$ and will return to the stack content below.
    Note, if
    $\popguess = \undefpopguess$
    a collapse would reach the empty stack and is thus prohibited.
    The control states that may be reached upon this return together with the least rank seen during the play on this stack level are precisely what is captured by $\popguess$.
\end{itemize}

When the original game returns from a recursive call, the order reduction has to evaluate whether the obligation
is met.
To this end, we use the following definitions.
It is here that we use the rank-awareness of $\cpds$.
\begin{itemize}
\item
    Pop: $\tup{\orcontrol, \cha', \noop, \controlsink}$
    when $\genop$ is $\pop{\cpdsord}$,
    $\popguess = \tup{\controlset_0, \ldots, \controlset_\maxrank}$,
    and $\controlsink = \ewin$ if
    $\control' \in \controlset_{\ap{\colrankfun}{\cpdsord}}$
    when
    $\cha = \tup{\cha'', \colrankfun}$.
    Otherwise we have $\controlsink = \elose$.
    Recall that $\popguess$ is a component of $\orcontrol$ and $\control'$ is from the simulated rule in $\cpdsrules$.

\item
    Collapse:
    $\tup{\orcontrol,
          \tup{\cha, \popguess'}, \noop,
          \controlsink}$
    when
        $\genop$ is $\collapse$,
        $\popguess'$ is the vector
        $\tup{\controlset_0, \ldots, \controlset_\maxrank}$, and
        $\controlsink = \ewin$ if
            $\control' \in \controlset_{\ap{\colrankfun}{\link}}$
            when
            $\cha = \tup{\cha'', \colrankfun}$.
        Otherwise $\controlsink = \elose$.
        Note the universal quantification over $\popguess'$
        (excluding $\undefpopguess$).
\end{itemize}

\begin{remark}[Rank-Awareness]
    The reduction of Hague\etal~\cite{HMOS08} maintains a rank $\rank$ in the control state.
    We do not need this $\rank$ since it is obtainable via rank-awareness.
    The original definition of rank-awareness did not strictly include this information, hence it could not be used.
    In this work, we observed that level-$\opord$ rank information could also be stored, as it appears in the proof of rank-awareness.

    Secondly, to be able to chain together the various reduction steps, we need the output of each to remain rank-aware.
    Thus, we apply \reflemma{lem:rank-aware} after each reduction to re-establish the property.
    Since the blow-up is polynomial, this is benign.
\end{remark}

\subsection{Correctness}

\begin{proposition}
\label{prop:orderred}
    \theeplayer wins $\game$ iff he wins
    $\ap{\orderred}{\game}$.
\end{proposition}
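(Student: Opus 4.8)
The statement is the correctness of the order reduction, and I would prove both implications by transferring winning strategies for \theeplayer between $\game$ and $\ap{\orderred}{\game}$. In each direction the constructed strategy maintains, alongside the play it builds, a simulation of a play in the other game; because a $\push{\cpdsord}$ (and an order-$\cpdsord$ $\cpush{\cha}$) is split by $\orderred$ into a branch that runs the recursive call and a branch that jumps past it, this simulation is really a \emph{stack of play segments}, one segment per recursive call that has been entered but not yet resolved. The recurring invariant is that the segments glue together to a legal play in the other game that is consistent with the given strategy, and that the stack contents match (modulo the annotations $\tup{\cha,\popguess}$ and $\tup{\cha,\colrankfun}$). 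The one genuinely substantial ingredient is a rank-sequence correspondence lemma: contracting a finite recursive sub-play to a single occurrence of its minimal rank (and conversely) changes neither the set of infinitely recurring ranks nor which ranks occur only finitely often, so it preserves both the parity and the safety winning condition; this is what lets a \emph{Skip} move (whose rank is exactly that minimal rank) stand in for an entire call.

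\emph{Forward direction.} Let $\strat$ be a winning strategy for \theeplayer in $\game$ (we may take it positional, or just let the data below depend on the simulated $\game$-history). We define $\orof{\strat}$ for \theeplayer in $\ap{\orderred}{\game}$. Non-order-$\cpdsord$ operations are mirrored directly. When a $\push{\cpdsord}$ to $\control'$ is to be simulated at a reconstructed configuration $\config{\control}{\stackw}$, \theeplayer announces as the guess $\popguess' = \tup{\controlset_0,\dots,\controlset_\maxrank}$ the set of all pairs $(\control'',\rank)$ such that $\strat$ admits, from $\config{\control'}{\ap{\push{\cpdsord}}{\stackw}}$, a continuation that eventually returns from this call to control state $\control''$ with $\rank$ the minimal rank seen in between (indexed into the vector by $\rank$); the case $\ap{\chorder}{\chb}=\cpdsord$ for $\cpush{\chb}$ and $\collapse$ is analogous, using the obligation carried on $\tup{\chb,\popguess}$. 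This set is finite and well defined, so the move is legal. If \theaplayer takes the \emph{Push} branch, \theeplayer simulates $\strat$ one level deeper; if that sub-play ever performs its matching $\pop{\cpdsord}$ (or $\collapse$), then by rank-awareness (\refdefinition{def:rank-aware}) the Pop (resp.\ Collapse) rule reads $\rank=\ap{\colrankfun}{\cpdsord}$ (resp.\ $\ap{\colrankfun}{\link}$), the target state lies in $\controlset_\rank$ by construction of the guess, and the rule leads to $\ewin$; otherwise the sub-play is infinite and the ranks of the $\orof{\strat}$-play henceforth coincide with those of a $\strat$-play. If \theaplayer takes the \emph{Skip} branch with rank $\rank$ and $\control''\in\controlset_\rank$, then by definition of $\popguess'$ there is a $\strat$-consistent sub-play realising $(\control'',\rank)$; \theeplayer discards it, records that it returns to $\config{\control''}{\stackw}$, and keeps simulating $\strat$ from there, the rank $\rank$ of the Skip move accounting for that segment. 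An induction over the nesting structure, together with the rank-sequence correspondence, shows \theeplayer wins every play of $\orof{\strat}$; the state $\elose$ is reachable only through an incorrect guess, which never occurs.

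\emph{Backward direction.} Let $\strat$ be a winning strategy for \theeplayer in $\ap{\orderred}{\game}$. We build a strategy for \theeplayer in $\game$ that runs $\strat$ internally and keeps a stack of finite histories of $\ap{\orderred}{\game}$, one per pending order-$\cpdsord$ call. Non-order-$\cpdsord$ moves are translated one-to-one. At a $\push{\cpdsord}$, \theeplayer performs the \emph{Start} and \emph{Claim} steps via $\strat$ to obtain $\popguess'$, internally lets \theaplayer take the \emph{Push} branch, pushes the current segment, and continues one level deeper (analogously for an order-$\cpdsord$ $\cpush{\cha}$). When the $\game$-play performs the matching $\pop{\cpdsord}$ (or $\collapse$) to $\control''$, the active segment reaches the Pop (Collapse) rule; it cannot go to $\elose$, since then this segment extended by the $\elose$ self-loop would be a play consistent with $\strat$ yet lost by \theeplayer. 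Hence it goes to $\ewin$, which means $\control''\in\controlset_\rank$ for $\rank=\ap{\colrankfun}{\cpdsord}$ (resp.\ $\ap{\colrankfun}{\link}$), again by rank-awareness. \theeplayer then pops the saved segment, which ends at the \theaplayer-owned push/skip state, re-threads it with the legal \emph{Skip} move targeting $(\control'',\rank)$, and resumes $\strat$. Since the re-threaded reduced-game play (Skip at each resolved call, Push at each still-pending one) is consistent with $\strat$ and won by \theeplayer, the rank-sequence correspondence read in reverse shows \theeplayer wins the $\game$-play.

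\emph{Main obstacle.} The bookkeeping is routine once the stack-of-segments invariant is fixed; the real work is (i) the rank-sequence correspondence lemma, and (ii) proving that the announced guesses are exactly the reachable return data and suffice against both of \theaplayer's branches, which crucially relies on the Pop and Collapse rules reading precisely the level-$\cpdsord$ rank and the link-level rank from the rank-aware characters. Edge cases ($\undefpopguess$ when a pop or collapse would empty the stack, the initial configuration, the one-move delay inherent in rank-awareness) are dispatched directly.
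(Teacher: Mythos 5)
Your plan is correct, but it is doing far more work than the paper does at this point: the paper's entire proof of Proposition~\ref{prop:orderred} is a citation of Hague\etal~\cite{HMOS08} for the parity case, plus the single observation needed to extend it to safety games, namely that a play of $\ap{\orderred}{\game}$ is a factorization of a play of $\game$ in which the minimal rank of each contracted factor is faithfully recorded, so an occurrence of rank $1$ can never be hidden inside a skipped recursive call. What you have written is, in effect, a reconstruction of the proof from~\cite{HMOS08} itself: your stack-of-segments invariant, the guess $\popguess'$ as the exact set of $\strat$-realizable return data, the $\ewin$/$\elose$ dichotomy at Pop/Collapse, and the re-threading of a Push branch into a Skip branch in the backward direction are all the standard Walukiewicz-style strategy transfer, and your ``rank-sequence correspondence lemma'' (contracting a finite sub-play to a single occurrence of its minimum preserves the least infinitely occurring rank, and preserves whether rank $1$ ever occurs) is precisely the factorization property the paper leans on. So the two approaches buy different things: the paper's version is economical and isolates the one new claim (that the known parity argument already yields the safety case), while yours is self-contained and makes explicit where rank-awareness is used (the Pop and Collapse rules reading $\ap{\colrankfun}{\cpdsord}$ and $\ap{\colrankfun}{\link}$). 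If you keep your version, the only points I would insist you spell out are the two halves of the correspondence lemma (the $\liminf$ argument needs the observation that ranks strictly below the eventual minimum occur only finitely often, so late contracted factors containing the minimum have exactly that minimum as their min), and the justification that the re-threaded history in the backward direction is still $\strat$-consistent because the Push/Skip choice belongs to \theaplayer.
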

\begin{proof}
    When $\game$ is a parity game then
    $\ap{\orderred}{\game}$
    is also a parity game and the result is by Hague\etal~\cite{HMOS08}.
    When $\game$ is a safety game, the proof, \opcit, also shows equivalence.
    This is because a play in
    $\ap{\orderred}{\game}$
    is shown to be a factorization of a play in $\game$, where the minimal rank seen in a factor is faithfully recorded.
    Since, in a safety game, the minimal rank is $1$, then a violation of the safety condition will always be recognized.
\end{proof}

\section{Details of Counter Reduction}
\label{sec:appendix-counterreduction}

We complete the following definition in the sequel.

\begin{definition}[$\counterred{\bound}$]
    Given a bound $\bound\in \naturals$ and a parity game
    $\game = \tup{\cpds, \owner, \wincond}$
    over an order-$\cpdsord$ CPDS
    $\cpds = \tup{\controls,
                  \alphabet, \chorder,
                  \cpdsrules,
                  \controlinit, \chainit,
                  \rankf}$,
    we define a safety game
    $\ap{\counterred{\bound}}{\game} = \tup{\crcpds, \owner', 2^{\omega}}$
    where
    $\crcpds = \tup{\crcontrols,
                    \cralphabet, \crchorder,
                    \crcpdsrules,
                    \controlinit, \crchainit,
                    \crrankf}$.
\end{definition}

Let $\maxrank$ be the maximum rank in the game.

\subsection{Control States and Alphabet}

The control states are
\[
    \crcontrols\ =\ \controls \cup \brac{\controls \times [0, \maxrank]}
                 \cup
                 \set{\eover} \ .
\]
Control states $\tup{\control, \rank}$ indicate that a $\counterinc{\rank}$ operation needs to be performed.
The state $\eover$ is a sink state indicating \theeplayer has lost the game.
The owner of $\control \in \controls$ remains unchanged.
The owner of the states $\tup{\control, \rank}$ and $\eover$ does not matter, we give them to \theeplayer.
All control states except $\eover$ have rank $2$.
Control state $\eover$ has rank $1$.

The alphabet contains counter values for the odd ranks,
\[
    \cralphabet\ =\ \alphabet \times [1,\bound]^{\ceil{\maxrank / 2}}\ .
\]
The initial letter is $\crchainit= \tup{\chainit, \tup{0, \ldots, 0}}$ and
$\ap{\crchorder}{\cha, \countervec} = \ap{\chorder}{\cha}$.

\subsection{Rules}
The rules of $\crcpds$ proceed as in $\cpds$ but also maintain the counters.
We consider $\tup{\control_1, \cha, \genop, \control_2} \in \cpdsrules$ and list the rules in $\crcpdsrules$ by a case distinction along the operation.
To shorten the notation, we use $\rank_2$ for $\ap{\rankf}{\control_2}$.
Moreover, we let $\cha'=\tup{\cha, \countervec}$ and $\chb=\tup{\chb, \countervec}$.
Here, $\countervec$ is implicitly universally quantified but the same vector if both letters are used in the same rule.
Note that the pop and collapse simulations make use of the fact that we can assume $\cpds$ to be rank-aware.
That is, the stack character is $\cha = \tup{\cha'', \colrankfun}$. We have
\begin{itemize}
\item
    Push:
    $\tup{\control_1, \cha', \push{\opord}, \tup{\control_2, \rank_2}}$ when $\genop=\push{\opord}$.\vspace{0.1cm}

\item
    Character push:
    $\tup{\control_1, \cha', \cpush{\chb'}, \tup{\control_2, \rank_2}}$
    when
        $\genop= \cpush{\chb}$.\vspace{0.1cm}
\item
    Rewrite:
    $\tup{\control_1, \cha', \rew{\chb'}, \tup{\control_2, \rank_2}}$
    when $\genop=\rew{\chb}$.\vspace{0.1cm}

\item
    Pop:
    $\tup{\control_1, \cha', \pop{\opord}, \tup{\control_2, \rank'}}$
    when $\genop=\pop{\opord}$ and the rank is defined as
        $\rank' = \ap{\min}{\rank_2, \ap{\colrankfun}{\opord}}$.\vspace{0.1cm}

\item
    Collapse:
    $\tup{\control_1, \cha', \collapse, \tup{\control_2, \rank'}}$
    when $\genop=\collapse$ and the rank is defined as
            $\rank' = \ap{\min}{
                \rank_2,
                \ap{\colrankfun}{\link}
            }$.
\end{itemize}
The following rules perform the counter update.
For each state $\tup{\control, \rank}$ with current counter valuation $\countervec$, they determine $\ap{\counterinc{\rank}}{\countervec}$.
If no overflow occurs, the rules lead to $\control$ and the counter is updated with a rewriting operation.
Otherwise, they move to the sink state where \theeplayer loses.
\begin{itemize}
\item
    Update (suc):
    $\tup{\tup{\control, \rank},
          \tup{\cha, \countervec},
          \rew{\tup{\cha, \countervec'}},
          \control}$
    if
        $\ap{\counterinc{\rank}}{\countervec} = \countervec' \neq \overflow$.\vspace{0.1cm}

\item
    Update (fail):
    $\tup{\tup{\control, \rank},
          \tup{\cha, \countervec},
          \noop,
          \eover}$
    when
        $\ap{\counterinc{\rank}}{\countervec} = \overflow$.\vspace{0.1cm}
\item
    16 August 1977:
    $\tup{\eover, \cha, \noop, \eover}$.
\end{itemize}

\begin{remark}[Rank-Awareness]
    We finalise the reduction by applying Lemma~\ref{lem:rank-aware} to obtain a rank-aware CPDS.
\end{remark}

\section{Proof of Equivalence Between $\gameco$ and $\gameoc$}
\label{sec:equivalence-proof}

\subsection{Rank-Awareness}

The constructions of $\gameco$ and $\gameoc$ involve an intermediate rank-awareness transformation between the order-reduction and the counter-reduction (or vice-versa).
In the main text we have treated this operation as a black-box, meaning that it is hard to derive any relationship between the control states and stack characters of $\gameco$ and $\gameoc$.
As can be seen from Section~\ref{sec:order-1-rank-aware} the required construction is simply a product construction, where additional information is stored in the states/characters.
This information is merely maintained without having a restrictive effect on plays of the game.
That is, the resulting game is not fundamentally different from the original game: the plays and moves are the same, except an extra component of the control states and characters is updated during (or immediately after) each move.

Thus, to avoid notational clutter, we will omit this transformation from the following proof.
One may consider the rank-awareness information to be encoded as an implicit component of the control states and characters.

\subsection{From $\gameco$ to $\gameoc$}

Assume a winning strategy $\strat$ for \theeplayer in $\gameco$.
We demonstrate a winning strategy for \theeplayer in $\gameoc$.
Note, \theeplayer wins if they avoid reaching control state $\eover$ in $\gameco$ and any control state $\tup{\eover, \ldots}$ in $\gameoc$.

We assume without loss of generality that $\strat$ makes no guess
$\popguess = \tup{\controlset_0, \ldots, \controlset_\maxrank}$
that is not minimal.
That is, there is no winning strategy that could have guessed
$\popguess' = \tup{\controlset'_0, \ldots, \controlset'_\maxrank}$
with
$\controlset'_\rank \subseteq \controlset_\rank$
for all $\rank$ and the inclusion is strict in at least one $\rank$.

The initial configuration of both $\gameco$ and $\gameoc$ is
\[
    \config{
        \tup{\controlinit,
             \undefpopguess}
    }{
        \sbrac{\cdots\sbrac{
            \tup{\chainit, 0, \ldots, 0}
        }{1}\cdots}{\cpdsord}
    } \ .
\]

Hence, assume the play of $\gameco$ has reached some configuration
\[
    \config{
        \tup{\control_1, \popguess_1}
    }{
        \stackw_1
    }
\]
where
$\control_1$ is a control state of $\game$ and
$\popguess_1 = \undefpopguess$
or
$\popguess_1 = \tup{\controlset^1_0, \ldots, \controlset^1_\maxrank}$
where for all $\rank$ we have that
$\eover \notin \controlset^1_{\rank}$
and there is no
$\rank' \neq \rank$
and
$\control \in \controls$
with
$\tup{\control, \rank'} \in \controlset^1_{\rank}$.
Similarly, all characters
$\tup{\cha, \countervec, \popguess}$
appearing in $\stackw$ satisfy the same assumptions on $\popguess$.
Moreover, assume that the player of $\gameoc$ has reached a configuration
\[
    \config{
        \tup{\control_2, \popguess_2}
    }{
        \stackw_2
    }
\]
where
\begin{itemize}
\item
    $\control_1 = \control_2$,
\item
    $\popguess_1 = \popguess_2 = \undefpopguess$
    or
    $\popguess_2 = \tup{\controlset^2_0, \ldots, \controlset^2_\maxrank}$
    and for all $\rank$ we have
    $\control \in \controlset^2_\rank$
    iff
    $\tup{\control, \rank} \in \controlset^1_\rank$,
\item
    $\stackw_1$ and $\stackw_2$ differ only in the stack characters (not their structure -- i.e.~have the same stacks and links),
    in particular,
    \begin{itemize}
    \item
        if
        $\tup{\cha, \countervec}$
        appears in $\stackw_1$, then
        $\tup{\cha, \countervec}$
        also appears at the same position in $\stackw_2$.
    \item
        if
        $\tup{\cha, \countervec, \popguess}$
        appears in $\stackw_1$, then
        $\tup{\cha, \countervec, \popguess'}$
        appears at the same position in $\stackw_2$ and for all $\rank$ we have
        $\control \in \controlset'_\rank$
        iff
        $\tup{\control, \rank} \in \controlset'_\rank$
        where
        $\popguess = \tup{\controlset_0, \ldots, \controlset_\maxrank}$
        and
        $\popguess' = \tup{\controlset'_0, \ldots, \controlset'_\maxrank}$
        (or both are $\undefpopguess$).
    \end{itemize}
\end{itemize}

We assume it is \theaplayer who is to move in $\gameoc$.
We show that whichever move they make, there is a corresponding move in $\gameco$ and the invariant is maintained once play returns to a control state of the above form.
In the case that it is \theeplayer who is to move, the argument can easily be adapted by taking the move prescribed by $\strat$ in $\gameco$ and playing the corresponding move in $\gameoc$.
For this, we need that there are no additional moves in $\gameco$ which can easily be verified by inspecting the proof.
Indeed, this case is almost identical to the proof for moves by \theaplayer when arguing the opposite direction (from $\gameoc$ to $\gameco$), which is given below.

Thus, we case split on the move played by \theaplayer in $\gameoc$.
\begin{itemize}
\item
    Move
    $\tup{
        \tup{\control_2, \popguess_2},
        \cha,
        \genop,
        \tup{\control'_2, \popguess_2}
     }$
    where $\control'_2$ is a control state of $\game$.

    In this case there is also a move
    \[
        \tup{
           \tup{\control_1, \popguess_1},
           \cha',
           \genop',
           \tup{\control'_1, \popguess_1}
        }
    \]
    in $\gameco$ which maintains
    $\control'_1 = \control'_2$.
    There are several further cases depending on $\genop$.

    \begin{itemize}
    \item
        If $\genop = \rew{\chb}$ for some $\chb$ then if
        $\chb = \tup{\chb', \countervec, \popguess}$
        we have
        $\genop' = \rew{\tup{\chb', \countervec, \popguess'}}$
        where the required properties on $\popguess$ and $\popguess'$ follow from the fact that they also appeared in $\cha$ and $\cha'$ and our assumption
        (I.e.~$\cha$ and $\cha'$ are triples whose third component is $\popguess$ and $\popguess'$ respectively).
        Moreover, the counter values are updated identically.
        If
        $\chb = \tup{\chb', \countervec}$
        the argument is simpler.

    \item
        If
        $\genop = \cpush{\chb}$
        with
        $\chb = \tup{\chb', \countervec}$
        then
        $\genop' = \cpush{\chb}$
        and the invariant is maintained.

    \item
        If
        $\genop = \cpush{\chb}$
        with
        $\chb = \tup{\chb', \countervec, \popguess}$
        then
        $\popguess = \popguess_2$
        and we have
        $\genop' = \cpush{\tup{\chb', \countervec, \popguess'}}$
        with
        $\popguess' = \popguess_1$
        and the invariant is maintained.

    \item
        Otherwise $\genop$ adds no new characters to the stack (only copies) and $\genop' = \genop$ and the invariant is easily maintained.
    \end{itemize}

\item
    Move
    $\tup{
        \tup{\control_2, \popguess_2},
        \cha,
        \genop,
        \tup{\tup{\control'_2, \popguess_2}, \rank}
     }$
    where $\genop = \pop{\opord}$ or $\genop = \collapse$.

    In this case the corresponding move in $\gameco$ is
    \[
        \tup{
            \tup{\control_1, \popguess_1},
            \cha',
            \genop,
            \tup{\tup{\control'_1, \rank}, \popguess_1}
        }
    \]
    where $\cha = \cha'$ if
    $\cha = \tup{\chb, \countervec}$
    and otherwise
    $\cha = \tup{\chb, \countervec, \popguess}$
    and
    $\cha' = \tup{\chb, \countervec, \popguess'}$.
    Moreover
    $\control'_1 = \control'_2$.

    From the resulting configurations, there is only one move in each game.
    Note, the counters do not overflow since $\strat$ is winning.
    Assume
    $\cha = \tup{\chb, \countervec, \popguess}$
    (the other case is simpler).
    In $\gameoc$ the move \\
    $\tup{
        \tup{\tup{\control_2, \popguess_2}, \rank},
        \tup{\chb, \countervec, \popguess},
        \rew{\tup{\chb, \countervec', \popguess}},
        \tup{\control'_2, \popguess_2}
    }$ \\
    is applied where
    $\countervec' = \ap{\counterinc{\rank}}{\countervec}$.
    In $\gameco$ we have the move \\
    $\tup{
        \tup{\tup{\control_1, \rank}, \popguess_1},
        \tup{\chb, \countervec, \popguess'},
        \rew{\tup{\chb, \countervec', \popguess'}},
        \tup{\control'_1, \popguess_1}
     }$ \\
    and the invariant is maintained.

\item
    Move
    $\tup{
        \tup{\control_2, \popguess_2},
        \cha,
        \noop,
        \tup{\control_2, \popguess_2, \control}
     }$.

    In this case, $\control_2$ and $\control$ is a control state of $\game$ and there is a move in $\gameco$ of the form
    \[
        \tup{
            \tup{\control_1, \popguess_1},
            \cha,
            \noop,
            \tup{\control_1, \popguess_1, \control}
         } \ .
    \]

    In both cases, play reaches a state where \theeplayer must guess a sequence of sets of control states.
    Suppose $\strat$ in $\gameco$ recommends the move
    \[
        \tup{
            \tup{\control_1, \popguess_1, \control},
            \cha,
            \noop,
            \tup{\control_1, \popguess_1, \control, \popguess'_1}
        } \ .
    \]
    we note the following about
    $\popguess'_1 = \tup{\controlset_0, \ldots, \controlset_\maxrank}$
    (which is not $\undefpopguess$).
    \begin{itemize}
    \item
        Suppose
        $\tup{\control', \rank'} \in \controlset_\rank$
        with $\rank' \neq \rank$.
        Then consider the play from
        $\tup{\control, \ap{\rankf}{\control}, \popguess'_1}$.
        Suppose \theaplayer can force play to simulate a $\pop{\cpdsord}$ or $\collapse$ (for some order-$\cpdsord$ link) which relies on
        $\tup{\control', \rank'} \in \controlset_\rank$
        for \theeplayer to win.
        In this case, since by the definition of $\counterred{\bound}$ a state
        $\tup{\control', \rank'}$
        will only be reached if the smallest rank seen since the corresponding push is $\rank'$.
        However, for the victory of \theeplayer to depend on
        $\tup{\control', \rank'} \in \controlset_\rank$
        then the smallest rank seen is $\rank \neq \rank'$, which is a contradiction.
        Hence $\strat$ could have removed
        $\tup{\control', \rank'}$ from  $\controlset_\rank$.
        Since we have assumed $\strat$ is minimal, it must be the case that
        $\tup{\control', \rank'} \notin \controlset_\rank$.

    \item
        Suppose
        $\control' \in \controlset_\rank$.
        Then, we observe that the definition of
        $\counterred{\bound}$
        does not create any pop moves to $\control'$
        (only to $\tup{\control', \rank'}$ for some $\rank'$).
        Thus, $\strat$ could remove $\control'$ from $\controlset_\rank$ and still be winning.
        Since we assumed $\strat$ is minimal, we know
        $\control' \notin \controlset_\rank$.
    \end{itemize}
    Thus, in $\controlset_\rank$ we only have states of the form
    $\tup{\control, \rank}$.
    We play the move
    \[
        \tup{
            \tup{\control_2, \popguess_2, \control},
            \cha,
            \noop,
            \tup{\control_2, \popguess_2, \control, \popguess'_2}
        }
    \]
    where
    $\popguess'_2 = \tup{\controlset'_0, \ldots, \controlset'_\maxrank}$
    and for each $\rank$ we set
    $\controlset'_\rank$
    to the set
    $\setcomp{\control'}{\tup{\control', \rank} \in \controlset_\rank}$.

    Now there are two cases where \theaplayer can move to in $\gameoc$.
    \begin{itemize}
    \item
        If play moves to
        $\tup{\control', \popguess_2, \rank}$
        and then
        $\tup{\control', \popguess_2}$
        for some
        $\control' \in \controlset'_\rank$
        then play could also move to
        $\tup{\control', \popguess_1, \rank}$
        and then
        $\tup{\control', \popguess_1}$
        in $\gameco$ and the invariant is satisfied.

    \item
        If play moves to
        $\tup{\control, \popguess'_2}$
        then play could also move to
        $\tup{\control, \popguess'_1}$
        in $\gameco$ and the invariant is satisfied.
    \end{itemize}
\end{itemize}

To complete the proof, observe that \theeplayer can win the safety game by maintaining the proven invariant, since these plays never reach a losing state.

\subsection{From $\gameoc$ to $\gameco$}

Assume a winning strategy $\strat$ for \theeplayer in $\gameoc$.
We construct a winning strategy for \theeplayer in $\gameco$.
As before, the initial configuration of both games is
\[
    \config{
        \tup{\controlinit,
             \undefpopguess}
    }{
        \sbrac{\cdots\sbrac{
            \tup{\chainit, 0, \ldots, 0}
        }{1}\cdots}{\cpdsord}
    } \ .
\]

Assume that the player of $\gameoc$ has reached a configuration
\[
    \config{
        \tup{\control_2, \popguess_2}
    }{
        \stackw_2
    }
\]
where
$\popguess_2 = \tup{\controlset^2_0, \ldots, \controlset^2_\maxrank}$
(or $\undefpopguess$)
and that the play of $\gameco$ has reached some configuration
\[
    \config{
        \tup{\control_1, \popguess_1}
    }{
        \stackw_1
    }
\]
where
$\popguess_1 = \tup{\controlset^1_0, \ldots, \controlset^1_\maxrank}$
(or $\undefpopguess$)
and
\begin{itemize}
\item
    $\control_1 = \control_2$,
\item
    either
    $\popguess_1 = \popguess_2 = \undefpopguess$
    or neither are $\undefpopguess$ and for all $\rank$ we have
    $\tup{\control, \rank} \in \controlset^1_\rank$,
    iff
    $\control \in \controlset^2_\rank$
\item
    $\stackw_1$ and $\stackw_2$ differ only in the stack characters (not their structure -- i.e.~have the same stacks and links),
    in particular,
    \begin{itemize}
    \item
        if
        $\tup{\cha, \countervec}$
        appears in $\stackw_1$, then
        $\tup{\cha, \countervec}$
        also appears at the same position in $\stackw_2$.
    \item
        if
        $\tup{\cha, \countervec, \popguess}$
        appears in $\stackw_1$, then
        $\tup{\cha, \countervec, \popguess'}$
        appears at the same position in $\stackw_2$ and either
        $\popguess = \popguess' = \undefpopguess$
        or for all $\rank$ we have
        $\control \in \controlset'_\rank$
        iff
        $\tup{\control, \rank} \in \controlset'_\rank$
        where
        $\popguess = \tup{\controlset_0, \ldots, \controlset_\maxrank}$
        and
        $\popguess' = \tup{\controlset'_0, \ldots, \controlset'_\maxrank}$.
    \end{itemize}
\end{itemize}

We will consider moves by \theaplayer in $\gameco$ and show that there is a corresponding move in $\gameoc$.
That is, however \theaplayer continues the game in $\gameco$, the winning strategy for $\gameoc$ remains an adequate guide for winning $\gameco$.
When it is the turn of \theeplayer to move, we have to show that the move chosen by \theeplayer in $\gameco$ can be mimicked in $\gameco$.
This is straightforward to check following similar arguments as for \theaplayer case in the other direction.

There are several cases depending on the move by \theaplayer in $\gameco$.
\begin{itemize}
\item
    Move
    $\tup{
        \tup{\control_l, \popguess_1},
        \cha,
        \genop,
        \tup{\control'_1, \popguess_1}
     }$
    where $\control'_1$ is a control state of $\game$.

    In this case there is also a move
    \[
        \tup{
           \tup{\control_2, \popguess_2},
           \cha,
           \genop',
           \tup{\control'_2, \popguess_2}
        }
    \]
    in $\gameoc$ which maintains
    $\control'_1 = \control'_2$.
    There are several further cases depending on $\genop$.

    \begin{itemize}
    \item
        If $\genop = \rew{\chb}$ for some $\chb$ then if
        $\chb = \tup{\chb', \countervec, \popguess}$
        we have
        $\genop' = \rew{\tup{\chb', \countervec, \popguess'}}$
        where the required properties on $\popguess$ and $\popguess'$ follow from the fact that they also appeared in $\cha$ and our assumption
        (I.e.~$\cha$ and $\cha'$ are triples whose third component is $\popguess$ and $\popguess'$ respectively).
        Moreover, the counter values are updated identically.
        If
        $\chb = \tup{\chb', \countervec}$
        the argument is simpler.

    \item
        If
        $\genop = \cpush{\chb}$
        with
        $\chb = \tup{\chb', \countervec}$
        then
        $\genop' = \cpush{\chb}$
        and the invariant is maintained.

    \item
        If
        $\genop = \cpush{\chb}$
        with
        $\chb = \tup{\chb', \countervec, \popguess}$
        then
        $\popguess = \popguess_1$
        and we have
        $\genop' = \cpush{\tup{\chb', \countervec, \popguess'}}$
        with
        $\popguess' = \popguess_2$
        and the invariant is maintained.

    \item
        Otherwise $\genop$ adds no new characters to the stack (only copies) and $\genop' = \genop$ and the invariant is easily maintained.
    \end{itemize}

\item
    Move
    $\tup{
        \tup{\control_1, \popguess_1},
        \cha',
        \genop,
        \tup{\tup{\control'_1, \rank}, \popguess_1}
    }$
    where $\genop = \pop{\opord}$ or $\genop = \collapse$.

    In this case the corresponding move in $\gameoc$ is
    \[
        \tup{
            \tup{\control_2, \popguess_2},
            \cha,
            \genop,
            \tup{\tup{\control'_2, \popguess_2}, \rank}
        }
    \]
    where $\cha = \cha'$ if
    $\cha = \tup{\chb, \countervec}$
    and otherwise
    $\cha = \tup{\chb, \countervec, \popguess}$
    and
    $\cha' = \tup{\chb, \countervec, \popguess'}$.
    Moreover
    $\control'_1 = \control'_2$.

    From the resulting configurations, there is only one move in each game.
    Note, the counters do not overflow since $\strat$ is winning.
    Assume
    $\cha = \tup{\chb, \countervec, \popguess}$
    (the other case is simpler).
    In $\gameco$ we have the move
    \[
        \tup{
            \tup{\tup{\control_1, \rank}, \popguess_1},
            \tup{\chb, \countervec, \popguess'},
            \rew{\tup{\chb, \countervec', \popguess'}},
            \tup{\control'_1, \popguess_1}
         }
    \]
    is applied where
    $\countervec' = \ap{\counterinc{\rank}}{\countervec}$.
    In $\gameoc$ the move
    \[
        \tup{
            \tup{\tup{\control_2, \popguess_2}, \rank},
            \tup{\chb, \countervec, \popguess},
            \rew{\tup{\chb, \countervec', \popguess}},
            \tup{\control'_2, \popguess_2}
        }
    \]
    and the invariant is maintained.

\item
    Move
    $\tup{
        \tup{\control_1, \popguess_1},
        \cha,
        \noop,
        \tup{\control_1, \popguess_1, \control}
     }$.

    In this case, $\control_1$ and $\control$ is a control state of $\game$ and there is a move in $\gameoc$ of the form
    \[
        \tup{
            \tup{\control_2, \popguess_2},
            \cha,
            \noop,
            \tup{\control_2, \popguess_2, \control}
         } \ .
     \]

    In both cases, play reaches a state where \theeplayer must guess a sequence of sets of control states.
    Suppose $\strat$ in $\gameoc$ recommends the move
    \[
        \tup{
            \tup{\control_2, \popguess_2, \control},
            \cha,
            \noop,
            \tup{\control_2, \popguess_2, \control, \popguess'_2}
        } \ .
    \]
    In $\controlset_\rank$ we have states of the form $\control$ where $\control$ is a control state of $\game$.
    In $\gameco$ we play the move
    \[
        \tup{
            \tup{\control_1, \popguess_1, \control},
            \cha,
            \noop,
            \tup{\control_1, \popguess_1, \control, \popguess'_1}
         }
    \]
    where
    $\popguess'_1 = \tup{\controlset'_0, \ldots, \controlset'_\maxrank}$
    and for each $\rank$ we set
    \[
        \controlset'_\rank =
            \setcomp{\tup{\control', \rank}}
                    {\control' \in \controlset_\rank} \ .
    \]

    Now there are two cases where \theaplayer can move to in $\gameco$.
    \begin{itemize}
    \item
        If play moves to
        $\tup{\control', \popguess_1, \rank}$
        and then
        $\tup{\control', \popguess_1}$
        for some
        $\control' \in \controlset'_\rank$
        then play could also move to
        $\tup{\control', \popguess_2, \rank}$
        and then
        $\tup{\control', \popguess_2}$
        in $\gameoc$ and the invariant is satisfied.

    \item
        If play moves to
        $\tup{\control, \popguess'_1}$
        then play could also move to
        $\tup{\control, \popguess'_2}$
        in $\gameoc$ and the invariant is satisfied.
    \end{itemize}
\end{itemize}

To complete the proof, observe that \theeplayer can win the safety game by maintaining the proven invariant, since these plays never reach a losing state.

\section{Correctness of the Polynomial Reduction}
\label{sec:poly-equiv-proof}

We provide the proof of \reflemma{lem:poly-equiv}.
First note, since \theeplayer wins
$\ap{\counterred{\bound}}{\game}$
iff they win
$\ap{\counterred{\bound'}}{\game}$
for any $\bound' \geq \bound$ we can select $\bound$ such that Cachat and Walukiewicz's counter encoding precisely captures numbers between $0$ and $\bound$.

We also define $\polyred{\bound}$ on stack characters with links.
That is
\[
    \ap{\polyred{\bound}}{
        \annot{
            \tup{\cha, \counter_1, \ldots, \counter_\maxrank}
        }{
            \idxi
        }
    } =
    \cha
    \stackw_\maxrank
    \ldots
    \stackw_1
    \annot{\linkchar{\opord}}{\idxi}
\]
where the stack on the right has its topmost character on the left.
All characters except $\linkchar{\opord}$ have an order-1 link which is omitted.
Moreover, for each odd
$1 \leq \rank \leq \maxrank$
we have that $\stackw_\rank$ is the Cachat and Walukiewicz encoding of $\counter_\rank$.
Let
\[
    \ap{\polyred{\bound}}{\control, \stackw}
        = \tup{\control, \stackw'}
\]
where $\stackw'$ is obtained from $\stackw$ by the pointwise application of
$\polyred{\bound}$
to each character in the stack.

In both directions below we will maintain the invariant that if
$\ap{\counterred{\bound}}{\game}$
is in configuration
$\tup{\control, \stackw}$
then
$\ap{\polyred{\bound}}{\game}$
is in configuration
$\ap{\polyred{\bound}}{\control, \stackw}$.

We first consider the behaviour for important subgames of the game
$\ap{\polyred{\bound}}{\control, \stackw}$.
In the following, we do not show unimportant links (to avoid notational clutter).
Moreover, for a sequence of characters with links
$\annot{\cha_1}{\idxi_1}
 \ldots
 \annot{\cha_\numof}{\idxi_\numof}$
we abuse notation and write
\[
    \ccompose{
        \annot{\cha_1}{\idxi_1}
        \ldots
        \annot{\cha_\numof}{\idxi_\numof}
    }{1}{\stackw}
\]
in place of
$\ccompose{\annot{\cha_1}{\idxi_1}}{1}{
    \ccompose{\cdots}{1}{
        \ccompose{\annot{\cha_\numof}{\idxi_\numof}}{1}{
            \stackw
        }
    }
 }$.

\subsection{Behaviour of Subgames}

We argue that certain subgames of
$\ap{\polyred{\bound}}{\game}$
implement specific desired behaviour.

\begin{itemize}
\item
    From a configuration
    $\tup{\controlzero{\rank}{\cha}{\control}, \stackw}$
    either \theeplayer wins the game or play reaches
    $\tup{\control, \ccompose{\stackw'}{1}{\stackw}}$
    where the stack $\stackw'$ is
    $\cha \stackw_\maxrank \ldots \stackw_\rank$
    and for each odd
    $\maxrank \geq \rank' \geq \rank$
    we have that
    $\stackw_{\rank'}$
    is the Cachat and Walukiewicz encoding of $0$.

    We assume that the topmost characters of $\stackw$ are from
    $\encodingalphabet{\rank} \cup \set{\bzero{\rank}}$.
    Let
    $\stackw = \ccompose{\stacku}{1}{\stackw'}$
    where $\stackw'$ contains no characters from
    $\counteralphabet{\rank}$.

    This configuration belongs to \theaplayer and they may push as many characters from
    $\encodingalphabet{\rank} \cup \set{\bzero{\rank}}$
    as they wish onto the stack.
    If they push forever, they will never reach an unsafe configuration and \theeplayer will win.
    The only other move is to
    \[
        \tup{\controlzerotest{\rank}{\cha}{\control},
             \ccompose{\stacku'}{1}{\stackw'}}
    \]
    where
    $\stacku' \in \brac{\encodingalphabet{\rank} \cup \set{\bzero{\rank}}}^\ast$.

    We know from the above that $\stacku'$ does not contain $\bone{\rank}$.
    At this configuration it is the move of \theeplayer.
    If $\stacku'$ does not encode a valid counter, \theeplayer moves to
    $\countercheck{\rank}$
    and \theaplayer loses the game
    (this includes the case where \theaplayer did not push anything).
    Otherwise, playing to
    $\countercheck{\rank}$
    would cause \theeplayer to lose.

    We now argue by induction.
    In the base case, let $\rank = \maxrank$.
    We know $\stacku'$ is a valid counter using only $\bzero{\maxrank}$ and \theeplayer moves to
    $\tup{\control, \ccompose{\cha\stacku'}{1}{\stackw'}}$
    as required.
    If $\rank < \maxrank$ then \theeplayer moves to
    $\tup{\controlzero{\rank+2}{\cha}{\control},
          \ccompose{\stacku'}{1}{\stackw'}}$
    and by induction, either \theeplayer wins or play reaches
    $\tup{\control, \ccompose{\stackw''}{1}{\stackw}}$
    where
    $\stackw'' = \cha \stackw_\maxrank \ldots \stackw_\rank$
    as required.

\item
    From a configuration
    $\tup{\controlcopy{\rank}{\cha}{\control}, \stackw'}$
    with
    $\stackw' =$
    \[
         \ccompose{\stacku
                   \stackw_{\rank-2}\ldots\stackw_1
                   \annot{\linkchar{\opord}}{\idxi}
                   \chb
                   \stackw_\maxrank\ldots\stackw_1
                   \annot{\linkchar{\opord'}}{\idxi'}}
                  {1}
                  {\stackw}
    \]
    where
    $\stackw_{\rank'} \in \counteralphabet{\rank'}^\ast$
    for each odd $\rank'$ and
    $\stacku \in \counteralphabet{\rank}^\ast$,
    either \theeplayer wins the game or play reaches the configuration
    $\tup{\controlincnoa{\ap{\rankf}{\control}}{\control}, \stackw''}$
    with
    $\stackw'' =$
    \[
         \ccompose{\stackw_\maxrank\ldots\stackw_1
                   \annot{\linkchar{\opord}}{\idxi}
                   \chb
                   \stackw_\maxrank\ldots\stackw_1
                   \annot{\linkchar{\opord'}}{\idxi'}}
                  {1}
                  {\stackw} \ .
    \]
    The argument is almost identical to the above case.
    The key difference is that \theaplayer can push any character from
    $\counteralphabet{\rank}$
    including $\bone{\rank}$.
    They have the option to push forever (and lose) or move to
    $\controlcopytest{\rank}{\cha}{\control}$.
    If they have not copied $\stackw_\rank$ correctly, \theeplayer can move to
    $\equalscheck{\rank}$
    and win the game.
    If they have copied $\stackw_\rank$ correctly then \theeplayer must move to either
    \begin{itemize}
    \item
        $\tup{\controlincnoa{\ap{\rankf}{\control}}{\control},
              \stackw''}$
        (where $\stackw''$ is the target stack defined above)
        if $\rank = \maxrank$, or

    \item
        $\tup{\controlcopy{\rank+2}{\cha}{\control},
              \stackw'''}$
        where
        $\stackw''' =$
        \[
            \ccompose{\stackw_{\rank}\ldots\stackw_1
                      \annot{\linkchar{\opord}}{\tup{\opord,\idxi}}
                      \chb
                      \stackw_\maxrank\ldots\stackw_1
                      \annot{\linkchar{\opord'}}{\tup{\opord',\idxi'}}}
                     {1}
                     {\stackw}
        \]
    \end{itemize}
    or lose the game.
    In the former case, we have reached the required configuration.
    In the latter, we will reach the required configuration by induction.

\item
    From a configuration
    \[
        \tup{\controlincnoa{\rank}{\control},
             \ccompose{\cha
                       \stackw_\maxrank \ldots \stackw_1
                       \annot{\linkchar{\opord}}{\idxi}}
                      {1}
                      {\stackw}}
    \]
    where there are counters
    $\counter_1,\ldots,\counter_\maxrank$
    such that
    \[
        \ap{\polyred{\bound}}{
            \annot{\tup{\cha, \counter_1, \ldots, \counter_\maxrank}}
                  {\idxi}
        }
        =
        \cha
        \stackw_\maxrank \ldots \stackw_1
        \annot{\linkchar{\opord}}{\idxi}
    \]
    then
    \begin{itemize}
    \item
        if
        $\tup{\counter'_1, \ldots, \counter'_\maxrank}
         =
         \ap{\counterinc{\rank}}{\counter_1, \ldots, \counter_\maxrank}$
        then \theeplayer either wins the game or play reaches
        \[
            \tup{\control,
                 \ccompose{\cha
                           \stackw'_\maxrank \ldots \stackw'_1
                           \annot{\linkchar{\opord}}{\idxi}}
                          {1}
                          {\stackw}}
        \]
        where
        \[
            \ap{\polyred{\bound}}{
                \annot{\tup{\cha, \counter'_1, \ldots, \counter'_\maxrank}}
                      {\idxi}
            }
            =
            \cha
            \stackw'_\maxrank \ldots \stackw'_1
            \annot{\linkchar{\opord}}{\idxi}
        \]
        and

    \item
        if
        $\ap{\counterinc{\rank}}{\counter_1, \ldots, \counter_\maxrank}
            = \overflow$
        then \theeplayer loses the game.
    \end{itemize}

    The first move is necessarily to
    \[
        \tup{\controlinc{\rank}{\cha}{\control},
             \ccompose{\stackw_\maxrank \ldots \stackw_1
                       \annot{\linkchar{\opord}}{\idxi}}
                      {1}
                      {\stackw}}
    \]
    From here it is only possible to pop from the stack until play reaches
    \[
        \tup{\controlinc{\rank}{\cha}{\control},
             \ccompose{\stackw_{\rank'} \ldots \stackw_1
                       \annot{\linkchar{\opord}}{\idxi}}
                      {1}
                      {\stackw}}
    \]
    where $\rank = \rank'$ if $\rank$ is odd, else $\rank = \rank'-1$.

    If $\rank$ is even the only possible move is to
    \[
        \tup{\controlzero{\rank+1}{\cha}{\control},
             \ccompose{\stackw_{\rank'} \ldots \stackw_1
                       \annot{\linkchar{\opord}}{\idxi}}
                      {1}
                      {\stackw}}
    \]
    from which, as argued above, either \theeplayer wins the game or play reaches
    \[
        \tup{\control,
             \ccompose{\cha\stackw'_\maxrank \ldots \stackw'_{\rank+1}
                       \stackw_{\rank'} \ldots \stackw_1
                       \annot{\linkchar{\opord}}{\idxi}}
                      {1}
                      {\stackw}}
    \]
    where each $\stackw'_{\rank''}$ encodes the number $0$.
    In this case, the counters have been correctly implemented as required.

    If $\rank$ is odd then, since the top of the stack has characters from
    $\counteralphabet{\rank}$
    it is only possible to perform $\pop{1}$ operations until the first $\bzero{\rank}$ is found.

    If
    $\ap{\counterinc{\rank}}{\counter_1, \ldots, \counter_\maxrank}
        = \overflow$
    then there is no such $\bzero{\rank}$ and eventually all characters from
    $\counteralphabet{\rank}$
    are removed and play reaches a configuration
    $\tup{\controlinc{\rank}{\cha}{\control}, \stackw''}$
    where the topmost character of $\stackw''$ is either from
    $\counteralphabet{\rank-2}$
    or is $\linkchar{\opord}$.
    In this case play moves to the control state $\elose$ and \theeplayer loses the game as required.

    Otherwise, play reaches
    \[
        \tup{\controlinc{\rank}{\cha}{\control},
             \ccompose{\bzero{\rank}\stacku
                       \stackw_{\rank-2} \ldots \stackw_1
                       \annot{\linkchar{\opord}}{\idxi}}
                      {1}
                      {\stackw}}
    \]
    where $\stacku$ is the suffix of $\stackw_\rank$ after the first $\bzero{\rank}$.
    From here the only possible move is to
    \[
        \tup{\controlzero{\rank}{\cha}{\control},
             \ccompose{\bone{\rank}\stacku
                       \stackw_{\rank-2} \ldots \stackw_1
                       \annot{\linkchar{\opord}}{\idxi}}
                      {1}
                      {\stackw}}
    \]
    and \theaplayer can push any sequence of characters from
    $\encodingalphabet{\rank} \cup \set{\bzero{\rank}}$.
    If they push forever, then \theeplayer wins the game.
    Otherwise, they must reach
    \[
        \tup{\controlzerotest{\rank}{\cha}{\control},
             \ccompose{\stacku'\bone{\rank}\stacku
                       \stackw_{\rank-2} \ldots \stackw_1
                       \annot{\linkchar{\opord}}{\idxi}}
                      {1}
                      {\stackw}}
    \]
    where
    $\stacku' \in \brac{\encodingalphabet{\rank} \cup \set{\bzero{\rank}}}^\ast$.
    If
    $\stacku'\bone{\rank}\stacku$
    is not a valid counter encoding, then \theeplayer moves to
    $\countercheck{\rank}$
    and wins the game.
    Otherwise, the counter encoding is valid and by design, the value denoted by
    $\stacku'\bone{\rank}\stacku$
    must be the increment of the value denoted by
    $\stackw_\rank$.

    If $\rank = \maxrank$ the only possible move is to
    \[
        \tup{\control, \cha
                       \stackw'_\maxrank
                       \stackw_{\maxrank-2}
                       \ldots
                       \stackw_1
                       \annot{\linkchar{\opord}}{\idxi}}
    \]
    where
    $\stackw'_\maxrank = \stacku'\bone{\rank}\stacku$
    and the configuration can be seen to satisfy the requirements above.

    Finally, if $\rank < \maxrank$ then the only possible move is to
    \[
        \tup{\controlzero{\rank+2}{\cha}{\control},
             \ccompose{\stackw'_{\rank} \ldots \stackw_1
                       \annot{\linkchar{\opord}}{\idxi}}
                      {1}
                      {\stackw}}
    \]
    where
    $\stackw'_\maxrank = \stacku'\bone{\rank}\stacku$
    and from which, as argued above, either \theeplayer wins the game or play reaches the configuration
    $\tup{\control,
          \ccompose{\cha\stackw'_\maxrank \ldots \stackw'_{\rank}
                    \stackw_{\rank-2} \ldots \stackw_1
                    \annot{\linkchar{\opord}}{\idxi}}
                   {1}
                   {\stackw}}$
    where each $\stackw'_{\rank''}$ for $\rank'' > \rank$ encodes the number $0$.
    In this case as well, the counters have been correctly implemented as required.
\end{itemize}

\subsection{%
    From
    $\ap{\counterred{\bound}}{\game}$
    to
    $\ap{\polyred{\bound}}{\game}$
}

Suppose $\strat$ is a winning strategy for \theeplayer in the game
$\gamebound = \ap{\counterred{\bound}}{\game}$.
We define a winning strategy $\strat'$ for \theeplayer in
$\gamepoly = \ap{\polyred{\bound}}{\game}$.
The strategy is to essentially copy the winning play from
$\ap{\counterred{\bound}}{\game}$.

The initial configuration in $\gamebound$ is
\[
    \tup{\controlinit,
         \sbrac{\cdots\sbrac{
             \tup{\chainit, 0, \ldots, 0}
         }{1}\cdots}{\cpdsord}} \ .
\]
The initial configuration in $\gamepoly$ is
$\tup{\controlinit',
      \sbrac{\cdots\sbrac{
          \chainit'
      }{1}\cdots}{\cpdsord}}$
from which play is forced to
\[
    \tup{\controlzero{1}{\chainit}{\control},
         \sbrac{\cdots\sbrac{
             \linkchar{1}
         }{1}\cdots}{\cpdsord}}
\]
and, by the previous section, either \theeplayer wins, or play reaches
$\tup{\control,
      \chainit
      \stackw_\maxrank \ldots \stackw_1
      \linkchar{1}}$
where each $\stackw_\rank$ encodes $0$.
Thus, we have set up the invariant that when $\gamebound$ is in
$\tup{\control, \stackw}$
then $\gamepoly$ is in
$\ap{\polyred{\bound}}{\control, \stackw}$.

We show that for every pair of moves played in $\gamebound$ there is a play in $\gamepoly$ that is either winning for \theeplayer or re-establishes the invariant.
There are a number of cases.
Note, counters never overflow because $\strat$ is winning for \theeplayer.
\begin{itemize}
\item
    Whenever we have
    $\tup{\control,
          \tup{\cha, \counter_\maxrank, \ldots, \counter_1},
          \push{\opord},
          \tup{\control', \rank}}$
    followed by
    \[
        \tup{\tup{\control', \rank},
             \tup{\cha, \counter_\maxrank, \ldots, \counter_1},
             \rew{\tup{\cha, \counter'_\maxrank, \ldots, \counter'_1}},
             \control'}
    \]
    (with
     $\rank = \ap{\rankf}{\control'}$)
    in $\gamebound$, then in $\gamepoly$ we have
    \[
        \tup{\control, \cha, \push{\opord}, \controlinc{\rank}{\cha}{\control'}}
    \]
    after which, as argued in the previous section, either \theeplayer wins or play reaches $\control'$ with the counters encoded on the stack incremented correctly.

\item
    Whenever we have
    \[
        \tup{\control,
             \tup{\cha, \counter_1, \ldots, \counter_\maxrank},
             \cpush{\tup{\chb, \counter_1, \ldots, \counter_\maxrank}},
             \tup{\control', \rank}}
    \]
    followed by
    \[
        \tup{\tup{\control', \rank},
             \tup{\chb, \counter_1, \ldots, \counter_\maxrank},
             \rew{\tup{\chb, \counter'_1, \ldots, \counter'_\maxrank}},
             \control'}
    \]
    in $\gamebound$ then in $\gamepoly$ we have
    \[
        \tup{\control,
             \cha,
             \cpush{\linkchar{\opord}},
             \controlcopy{\rank}{\chb}{\control'}} \ .
    \]
    As argued in the previous section, \theeplayer wins or play must reach
    $\controlincnoa{\rank}{\control'}$
    after pushing
    $\chb \stackw_\maxrank \ldots \stackw_1$
    encoding the counters
    $\counter_1, \ldots, \counter_\maxrank$
    and then from
    $\controlincnoa{\rank}{\control'}$
    play must reach
    $\control$
    with the counter encodings incremented to match
    $\counter'_1, \ldots, \counter'_\maxrank$
    as required.

\item
    Whenever we have
    \[
        \tup{\control,
             \tup{\cha, \counter_\maxrank, \ldots, \counter_1},
             \rew{\tup{\chb, \counter_1, \ldots, \counter_\maxrank}},
             \tup{\control', \rank}}
    \]
    followed by
    \[
        \tup{\tup{\control', \rank},
             \tup{\chb, \counter_\maxrank, \ldots, \counter_1},
             \rew{\tup{\chb, \counter'_\maxrank, \ldots, \counter'_1}},
             \control'}
    \]
    in $\gamebound$, then in $\gamepoly$ we have
    $\tup{\control, \cha, \rew{\chb}, \controlinc{\rank}{\cha}{\control'}}$
    after which, as argued in the previous section, either \theeplayer wins or play reaches $\control'$ with the counters encoded on the stack incremented correctly.

\item
    Whenever we have
    $\tup{\control,
          \tup{\cha, \counter_\maxrank, \ldots, \counter_1},
          \pop{\opord},
          \tup{\control', \rank}}$
    followed by
    \[
        \tup{\tup{\control', \rank},
             \tup{\chb, \counter'_\maxrank, \ldots, \counter'_1},
             \rew{\tup{\chb, \counter''_\maxrank, \ldots, \counter''_1}},
             \control'}
    \]
    in $\gamebound$, then in $\gamepoly$ we have two cases.
    \begin{itemize}
    \item
        When $\opord > 1$ we have
        $\tup{\control, \cha, \pop{\opord}, \controlinc{\rank}{\cha}{\control'}}$
        after which, as argued in the previous section, either \theeplayer wins or play reaches $\control'$ with the counters encoded on the stack incremented correctly.

    \item
        When $\opord = 1$ we move to
        $\controlpop{\genplayer}{\rank}{\control'}$
        and can only perform a sequence of $\pop{1}$ operations to remove
        $\cha \stackw_\maxrank \ldots \stackw_1 \linkchar{\opord}$
        from the stack and then reaching
        $\controlincnoa{\rank}{\control'}$
        after which, as argued in the previous section, either \theeplayer wins or play reaches $\control'$ with the counters encoded on the stack incremented correctly.
        (Note, if $\pop{1}$ were not available in $\gamebound$, it would be unavailable when reaching $\linkchar{\opord}$, at which point the instigating player would lose the game.)
    \end{itemize}

\item
    Whenever we have
    \[
        \tup{\control,
             \tup{\cha, \counter_\maxrank, \ldots, \counter_1},
             \collapse,
             \tup{\control', \rank}}
    \]
    followed by
    \[
        \tup{\tup{\control', \rank},
             \tup{\chb, \counter'_\maxrank, \ldots, \counter'_1},
             \rew{\tup{\chb, \counter''_\maxrank, \ldots, \counter''_1}},
             \control'}
    \]
    in $\gamebound$, then in $\gamepoly$ we move to
    $\controlcollapse{\genplayer}{\rank}{\control'}$
    and can only perform a sequence of $\pop{1}$ operations to remove
    $\cha \stackw_\maxrank \ldots \stackw_1$
    and uncover
    $\linkchar{\opord}$
    to which we apply $\collapse$ and reach
    $\controlincnoa{\rank}{\control'}$.
    As argued in the previous section, either \theeplayer then wins or play reaches $\control'$ with the counters encoded on the stack incremented correctly.
    (As before, note that if the collapse is not available, this becomes apparent on reacing $\linkchar{\opord}$ where the instigating player loses the game.)
\end{itemize}
In each case we maintain the invariant.
That is, \theeplayer wins the safety game.

\subsection{%
    From
    $\ap{\polyred{\bound}}{\game}$
    to
    $\ap{\counterred{\bound}}{\game}$
}

Suppose $\strat$ is a winning strategy for \theeplayer in the game
$\gamepoly = \ap{\polyred{\bound}}{\game}$.
We define a winning strategy $\strat'$ for \theeplayer in
$\gamebound = \ap{\counterred{\bound}}{\game}$.
As before, the strategy is to essentially copy the winning play from
$\ap{\polyred{\bound}}{\game}$.

The initial configuration in $\gamepoly$ is
$\tup{\controlinit',
      \sbrac{\cdots\sbrac{
                \chainit'
            }{1}\cdots}{\cpdsord}}$
from which play is forced to
$\tup{\controlzero{1}{\chainit}{\control},
      \sbrac{\cdots\sbrac{
          \linkchar{1}
      }{1}\cdots}{\cpdsord}}$
and, by the previous section, either \theeplayer wins, or play reaches
$\tup{\control,
      \chainit
      \stackw_\maxrank \ldots \stackw_1
      \linkchar{1}}$
where each $\stackw_\rank$ encodes $0$.
The initial configuration in $\gamebound$ is the configuration
$\tup{\controlinit,
      \sbrac{\cdots\sbrac{
          \tup{\chainit, 0, \ldots, 0}
      }{1}\cdots}{\cpdsord}}$.
Thus, we have set up the invariant that when $\gamebound$ is in
$\tup{\control, \stackw}$
then $\gamepoly$ is in
$\ap{\polyred{\bound}}{\control, \stackw}$.

We show that for every sequence of moves played in $\gamepoly$ (between control states $\control$) there is a play in $\gamebound$ that re-establishes the invariant.
There are a number of cases.
Note, counters never overflow because $\strat$ is winning for \theeplayer.
In the following, suppose play has reached
$\tup{\control,
      \ccompose{\cha
                \stackw_\maxrank \ldots \stackw_1
                \annot{\linkchar{\opord}}{\tup{\opord, \idxi}}}
               {1}{\stackw}}$.
Let
$\stackw_\maxrank \ldots \stackw_1$
encode the counters
$\counter_1, \ldots, \counter_\maxrank$.
\begin{itemize}
\item
    When in $\gamepoly$ we have
    $\tup{\control,
          \cha,
          \push{\opord'},
          \controlincnoa{\rank}{\control'}}$
    then, as argued in the previous section, \theaplayer can force play to reach
    $\control'$
    after correctly incrementing the counters.
    That is, the topmost counter encodings become
    $\stackw'_\maxrank \ldots \stackw'_1$
    encoding the counters
    $\counter'_1, \ldots, \counter'_\maxrank
        = \ap{\counterinc{\rank}}{\counter_1, \ldots, \counter_\maxrank}$.

    In $\gamebound$ we can simply perform the following two moves:
    \[
        \tup{\control,
             \tup{\cha, \counter_1, \ldots, \counter_\maxrank},
             \push{\opord'},
             \tup{\control', \rank}}
    \]
    and
    \[
        \tup{\tup{\control', \rank},
             \tup{\cha, \counter_1, \ldots, \counter_\maxrank},
             \rew{\tup{\cha, \counter'_1, \ldots, \counter'_\maxrank}},
             \control'} \ .
    \]
    This maintains the invariant as required.

\item
    If we have in $\gamepoly$ a move
    $\tup{\control,
          \cha,
          \cpush{\chb},
          \controlinc{\rank}{\cha}{\control'}}$
    (with
     $\rank = \ap{\rankf}{\control'}$)
    then, as argued in the previous section, \theaplayer can play to first reach
    $\controlincnoa{\rank}{\control'}$
    with a copy of the previous counters.
    That is play reaches
    $\tup{\controlincnoa{\rank}{\control'}, \stackw'}$
    where
    $\stackw' =$
    \[
         \ccompose{\chb
                   \stackw_\maxrank \ldots \stackw_1
                   \annot{\linkchar{\opord'}}{\idxi'}
                   \cha
                   \stackw_\maxrank \ldots \stackw_1
                   \annot{\linkchar{\opord}}{\idxi}}
                  {1}{\stackw} \ .
    \]
    Then, as also argued previously, \theaplayer can continue play to $\control'$ with the counters encoded on the stack incremented correctly.
    That is play reaches
    \[
        \tup{\control',
             \ccompose{\chb
                       \stackw'_\maxrank \ldots \stackw'_1
                       \annot{\linkchar{\opord'}}{\idxi'}
                       \cha
                       \stackw_\maxrank \ldots \stackw_1
                       \annot{\linkchar{\opord}}{\idxi}}
                      {1}{\stackw}}
    \]
    where
    $\ap{\counterinc{\rank}}{\counter_1, \ldots, \counter_\maxrank}
        = \tup{\counter'_1, \ldots, \counter'_\maxrank}$
    and
    $\stackw'_\maxrank \ldots \stackw'_1$
    encodes
    $\counter'_1, \ldots, \counter'_\maxrank$.

    In this case, we can play
    \[
        \tup{\control,
             \tup{\cha, \counter_\maxrank, \ldots, \counter_1},
             \cpush{\tup{\chb, \counter_\maxrank, \ldots, \counter_1}}{\opord'},
             \tup{\control', \rank}}
    \]
    followed by
    \[
        \tup{\tup{\control', \rank},
             \tup{\chb, \counter_\maxrank, \ldots, \counter_1},
             \rew{\tup{\cha, \counter'_\maxrank, \ldots, \counter'_1}},
             \control'}
    \]
    in $\gamebound$.

\item
    When in $\gamepoly$ we have
    $\tup{\control,
          \cha,
          \rew{\chb},
          \controlincnoa{\rank}{\control'}}$
    then, as argued in the previous section, \theaplayer can force play to reach
    $\control'$
    after correctly incrementing the counters.
    That is, the topmost counter encodings become
    $\stackw'_\maxrank \ldots \stackw'_1$
    encoding the counters
    $\counter'_1, \ldots, \counter'_\maxrank
        = \ap{\counterinc{\rank}}{\counter_1, \ldots, \counter_\maxrank}$.

    In $\gamebound$ we can simply perform the following two moves:
    \[
        \tup{\control,
             \tup{\cha, \counter_1, \ldots, \counter_\maxrank},
             \rew{\tup{\chb, \counter_1, \ldots, \counter_\maxrank}},
             \tup{\control', \rank}}
    \]
    and
    \[
        \tup{\tup{\control', \rank},
             \tup{\cha, \counter_1, \ldots, \counter_\maxrank},
             \rew{\tup{\cha, \counter'_1, \ldots, \counter'_\maxrank}},
             \control'} \ .
    \]
    This maintains the invariant as required.

\item
    Whenever we play in $\gamepoly$ a move
    \[
        \tup{\control, \cha, \pop{\opord'}, \controlinc{\rank}{\cha}{\control'}}
    \]
    with $\opord' > 1$ then, as argued in the previous section, \theaplayer can then force play to reach $\control'$ with the counters encoded on the stack incremented correctly.

    In $\gamebound$ we can play the following two moves:
    \[
        \tup{\control,
             \tup{\cha, \counter_\maxrank, \ldots, \counter_1},
             \pop{\opord'},
             \tup{\control', \rank}}
    \]
    and
    \[
        \tup{\tup{\control', \rank},
             \tup{\chb, \counter'_\maxrank, \ldots, \counter'_1},
             \rew{\tup{\chb, \counter''_\maxrank, \ldots, \counter''_1}},
             \control'} \ .
    \]
    The counters
    $\counter'_1, \ldots, \counter'_\maxrank$
    are the counters at the top of the stack after the $\pop{\opord}$ and
    $\counter''_1, \ldots, \counter''_\maxrank$
    are the same counters after the appropriate increment.

\item
    Whenever we play in $\gamepoly$ a move
    \[
        \tup{\control,
             \cha,
             \pop{1},
             \controlpop{\genplayer}{\rank}{\control'}}
    \]
    then the only possible continuation is to completely remove
    $\cha \stackw_\maxrank \ldots \stackw_1 \linkchar{\opord}$
    from the stack.
    Then play moves to
    $\controlincnoa{\rank}{\control'}$
    and as argued in the previous section, \theaplayer can then force play to reach $\control'$ with the counters encoded on the stack incremented correctly.

    In $\gamebound$ we can play the following two moves:
    \[
        \tup{\control,
             \tup{\cha, \counter_\maxrank, \ldots, \counter_1},
             \pop{1},
             \tup{\control', \rank}}
    \]
    and
    \[
        \tup{\tup{\control', \rank},
             \tup{\chb, \counter'_\maxrank, \ldots, \counter'_1},
             \rew{\tup{\chb, \counter''_\maxrank, \ldots, \counter''_1}},
             \control'} \ .
    \]
    Note, the $\pop{1}$ is available because the instigating player $\genplayer$ would lose the game on reaching $\linkchar{\opord}$ if not.
    The counters
    $\counter'_1, \ldots, \counter'_\maxrank$
    are the counters at the top of the stack after the $\pop{1}$ and
    $\counter''_1, \ldots, \counter''_\maxrank$
    are the same counters after the appropriate increment.

\item
    Whenever we play in $\gamepoly$ a move
    \[
        \tup{\control,
             \cha,
             \collapse,
             \controlcollapse{\genplayer}{\rank}{\control'}}
    \]
    then the only possible continuation is to completely remove
    $\cha \stackw_\maxrank \ldots \stackw_1$
    from the stack, exposing
    $\annot{\linkchar{\opord}}{\idxi}$.
    Then $\collapse$ has to be performed and play moves to
    $\controlincnoa{\rank}{\control'}$
    and as argued in the previous section, \theaplayer can then force play to reach $\control'$ with the counters encoded on the stack incremented correctly.

    In $\gamebound$ we can play the following two moves:
    \[
        \tup{\control,
             \tup{\cha, \counter_\maxrank, \ldots, \counter_1},
             \collapse,
             \tup{\control', \rank}}
    \]
    and
    \[
        \tup{\tup{\control', \rank},
             \tup{\chb, \counter'_\maxrank, \ldots, \counter'_1},
             \rew{\tup{\chb, \counter''_\maxrank, \ldots, \counter''_1}},
             \control'} \ .
    \]
    The counters
    $\counter'_1, \ldots, \counter'_\maxrank$
    are the counters at the top of the stack after the $\pop{\opord}$ and
    $\counter''_1, \ldots, \counter''_\maxrank$
    are the same counters after the appropriate increment.
    The availability of the collapse move is guaranteed in the same way as for $\pop{1}$.
\end{itemize}
In each case we maintain the invariant.
That is, \theeplayer wins the safety game.

\end{document}